\setlist[itemize]{noitemsep, topsep=0pt}
\def\BibTeX{{\rm B\kern-.05em{\sc i\kern-.025em b}\kern-.08em
    T\kern-.1667em\lower.7ex\hbox{E}\kern-.125emX}}
\begin{document}
\history{Received 4 July 2024, accepted 25 July 2024, date of publication 5 August 2024, date of current version 16 August 2024.}
\doi{10.1109/ACCESS.2024.3438996}

\definecolor{bindon-red}{RGB}{112, 0, 0}
\newcommand{\hyoung}[1]{\textcolor{red}{Hyoung:  #1}}
\newcommand{\hyunmin}[1]{\textcolor{blue}{hyunmin:  #1}}
\newcommand{\seungho}[1]{\textcolor{magenta}{seungho: #1}}
\newcommand{\bindon}[1]{\textcolor{bindon-red}{bindon: #1}}
\newcommand{\jihun}[1]{\textcolor{orange}{Jihun: #1}}
\newcommand{\seonhye}[1]{\textcolor{teal}{seonhye: #1}}
\newcommand{\sysname}{UniHENN\xspace}
\newcommand{\tenseal}{TenSEAL\xspace}
\newcommand{\pycrcnn}{PyCrCNN\xspace}
\newcommand{\lenet}{LeNet\xspace}
\newcommand{\relu}{ReLU\xspace}
\newcommand{\databoxframe}{\texttt{Data Box Frame}\xspace}

\title{\sysname: Designing Faster and More Versatile Homomorphic Encryption-based CNNs without \texttt{im2col}}


\author{
\uppercase{Hyunmin Choi}\authorrefmark{1} \authorrefmark{2}, 
\uppercase{Jihun Kim\authorrefmark{3},
\uppercase{Seungho Kim}\authorrefmark{1},
\uppercase{Seonhye Park}\authorrefmark{1},
\uppercase{Jeongyong Park}\authorrefmark{1},
\uppercase{Wonbin Choi}\authorrefmark{2},
\uppercase{Hyoungshick Kim}\authorrefmark{1}}
\address[1]{Department of Electrical and Computer Engineering, Sungkyunkwan University, Seoul 03063, Republic of Korea}
\address[2]{NAVER Cloud Security Dev., Seongnam-si 463-824, Republic of Korea}
\address[3]{Department of Mathematics, Sungkyunkwan University, Seoul 03063, Republic of Korea}
}
\tfootnote{This work was supported in part by NAVER Cloud Corporation; and in part by Korean Government through Institute of Information \&
communications Technology Planning \& Evaluation (IITP) under Grant RS-2022-II220688, Grant 2022-0-01199, Grant RS-2023-00229400, Grant RS-2024-00419073, and Grant RS-2024-00436936.}

\markboth
{H. Choi \headeretal: \sysname: Designing Faster and More Versatile Homomorphic Encryption-based CNNs without \texttt{im2col}}
{H. Choi \headeretal: \sysname: Designing Faster and More Versatile Homomorphic Encryption-based CNNs without \texttt{im2col}}

\corresp{Corresponding author: Hyoungshick Kim (hyoung@skku.edu).}

\begin{abstract}
Homomorphic encryption (HE) enables privacy-preserving deep learning by allowing computations on encrypted data without decryption. However, deploying convolutional neural networks (CNNs) with HE is challenging due to the need to convert input data into a two-dimensional matrix for convolution using the \texttt{im2col} technique, which rearranges the input for efficient computation. This restricts the types of CNN models that can be used since the encrypted data structure must be compatible with the specific model. \sysname is a novel HE-based CNN architecture that eliminates the need for \texttt{im2col}, enhancing its versatility and compatibility with a broader range of CNN models. \sysname flattens input data to one dimension without using \texttt{im2col}. The kernel performs convolutions by traversing the image, using incremental rotations and structured multiplication on the flattened input, with results spaced by the stride interval. Experimental results show that \sysname significantly outperforms the state-of-the-art 2D CNN inference architecture named PyCrCNN in terms of inference time. For example, on the LeNet-1 model, \sysname achieves an average inference time of 30.089 seconds, about 26.6 times faster than PyCrCNN's 800.591 seconds. Furthermore, \sysname outperforms TenSEAL, an \texttt{im2col}-optimized CNN model, in concurrent image processing. For ten samples, \sysname (16.247 seconds) was about 3.9 times faster than TenSEAL (63.706 seconds), owing to its support for batch processing of up to 10 samples. We demonstrate \sysname's adaptability to various CNN architectures, including a 1D CNN and six 2D CNNs, highlighting its flexibility and efficiency for privacy-preserving cloud-based CNN services.

\end{abstract}

\begin{keywords}
homomorphic encryption, privacy-preserving machine learning, data privacy
\end{keywords}

\titlepgskip=-15pt

\maketitle

\section{Introduction}
\label{sec:introduction}

The widespread adoption of deep learning has expanded beyond traditional tasks like image classification~\cite{xu2020social} to diverse applications such as attack detection~\cite{tian2019distributed} and resource extraction analysis~\cite{xing2023coal}, with machine learning (ML) now playing a crucial role in sensitive sectors including finance, healthcare, and autonomous systems. However, this proliferation has raised significant privacy concerns~\cite{liu2021machine,zhang2020privacy}, particularly for cloud-based ML services processing sensitive data on remote servers~\cite{abuadbba2020use}. Balancing ML's powerful capabilities with the need to safeguard individual privacy has become a critical challenge, highlighting the urgent need for advanced privacy-preserving techniques in ML applications handling sensitive information.

Homomorphic encryption (HE) is a powerful tool for preserving privacy in ML applications~\cite{kim2018logistic,HE_Accurate_CNN,Lee,Dathathri,kim2020efficient,kim2018poster}. For example, Choi et al. introduced privacy-preserving biometric authentication systems \cite{choi2024blind,blindmatch} that leverage HE for web and cloud environments. HE allows computations to be performed on encrypted data without decryption, ensuring that sensitive information remains confidential even when processed by a third party. This is particularly beneficial for cloud-based ML services, where sensitive data is often transmitted to remote servers for processing. By using HE, cloud providers can ensure data privacy, such as a bank training an ML model to detect fraudulent transactions without accessing customer data or a healthcare provider analyzing patient medical records without disclosing patient identities. Additionally, faster inference time with privacy preservation is crucial for real-time applications in these sensitive fields, where quick and secure data processing can significantly improve decision-making and user experience.


Convolution operations are essential to Convolutional Neural Networks (CNNs) but require significant computational resources when performed on ciphertexts in HE. Each kernel shift requires element-wise multiplications and additions, resource-intensive operations on ciphertexts. To reduce the computational load, most HE-based CNN implementations like TenSEAL~\cite{tenseal2021} use the \texttt{im2col} function. This function transforms the input data into a matrix without discarding or altering the original information, thus allowing for more efficient computation. TenSEAL is applied in various research areas, including healthcare~\cite{khalid2023privacy}, federated learning~\cite{wang2022federated}, and biometrics~\cite{yang2023review}. However, this approach limits the versatility of CNN models to configurations that accommodate only a single convolution layer, potentially hindering their use in more complex structures. In practical scenarios, encrypted user data could be useful across multiple ML models. For instance, an encrypted patient image stored in a hospital database could be utilized for future analysis by various ML algorithms. Therefore, encrypting data without restricting it to a specific model would offer greater flexibility and utility.

We introduce \sysname, a privacy-preserving CNN model using HE that enables efficient CNN inference without relying on the \texttt{im2col} function. Our novel convolution algorithm flattens input data into a one-dimensional form, eliminating the need for data rearrangement. The kernel traverses the input, performing convolutions with incremental rotations and structured multiplication, ensuring only relevant elements are used. This approach overcomes \texttt{im2col} limitations by reducing image size dependency and focusing on kernel size, improving efficiency and flexibility across various ML models. \sysname also supports batch operations for efficient multi-ciphertext processing. Our key contributions include:

\begin{itemize}[leftmargin=*]
    \item We introduce \sysname, a novel CNN model inference mechanism based on HE, which facilitates input ciphertext reusability. Unlike other CNN implementations that require a specific model input structure for the \texttt{im2col} function, \sysname is designed to handle model-free input ciphertexts, enabling the use of encrypted input data across various HE-based ML services without the need for re-encryption. We demonstrate the effectiveness of this approach by successfully constructing and training seven different CNN models on four datasets: MNIST \cite{mnist}, CIFAR-10 \cite{cifar10}, USPS \cite{uspsdataset}, and electrocardiogram (ECG) \cite{ecgdataset}. The source code for \sysname is available at~\url{https://github.com/hm-choi/uni-henn}.
    
    \item We empirically demonstrate the efficiency of \sysname. Experimental results indicate an average inference time of 30.089 seconds, significantly outperforming the state-of-the-art HE-based 2D CNN inference architecture PyCrCNN~\cite{disabato2020privacy}, which requires an average of 800.591 seconds for inference.
    
    \item We introduce a batch processing strategy for \sysname to handle multiple data instances in a single operation. This strategy efficiently combines multiple data instances into a single ciphertext, reducing the inference time for 10 MNIST images to 16.247 seconds. It outperforms TenSEAL's CNN model~\cite{tenseal2021}, which takes 63.706 seconds. While \sysname is less efficient than TenSEAL for processing a single image, it surpasses TenSEAL's performance when processing multiple images simultaneously, particularly when $k \geq 3$.
\end{itemize}
\section{Background}
\label{sec:backgrounds}


\subsection{Convolutional Neural Network (CNN)}

Convolutional Neural Networks (CNNs) were first introduced by LeCun et al.~\cite{lecun1989backpropagation} for processing grid-structured data like images. In 1989, the LeNet-1 model was presented~\cite{lecun1989handwritten}, comprising two convolutional layers, two average pooling layers, and one fully connected layer. Which is the first concept of the LeNet architecture.

In 1998, the LeNet-5 model was presented~\cite{lecun1998gradient}, comprising three convolutional layers, two pooling layers, and two fully connected layers. It was designed to efficiently recognize handwritten postal codes, outperforming traditional methods like the multilayer-perceptron model.

CNNs gained significant traction in the 2010s, primarily due to their excellent performance on large datasets like ImageNet~\cite{imagenet}, which includes millions of images across 1,000 classes. 

A typical CNN consists of an input layer, multiple hidden layers, and an output layer. The input layer receives the image and forwards its pixel values into the network. Since images are generally matrices, the node count in the input layer corresponds to the image size. The hidden layers, situated between the input and output layers, extract relevant features using convolution, activation, and pooling operations. Each hidden layer receives information from either the input or preceding layers, facilitating iterative learning. This information is then passed to subsequent layers, culminating in the final prediction or classification at the output layer.

Kiranyaz et al. \cite{7318926} introduced a 1D CNN for disease-specific ECG classification. A 1D CNN is a variant of CNNs tailored for one-dimensional data, making it particularly suitable for signal data, time-series data, and text data. In this architecture, both the input and the convolution filter are one-dimensional; the filter slides over the input in the convolution layer to execute operations.

\subsection{Homomorphic Encryption (HE)}

Homomorphic encryption (HE) is a cryptographic technique that allows computations on encrypted data without the need for decryption. Formally, given messages $m_1$ and $m_2$, an encryption function $Enc$, and computationally feasible functions $f$ and $f'$ for ciphertext and plaintext, respectively, HE satisfies $f(Enc(m_1), Enc(m_2)) = Enc(f'(m_1, m_2))$.

Several HE schemes, such as BGV~\cite{cryptoeprint:2011/277}, GSW-like schemes~\cite{brakerski2014lattice,chillotti2016faster}, and CKKS~\cite{cheon2017homomorphic}, exist, each with different computational needs and data types. In \sysname, we use the CKKS scheme, which is advantageous for encrypting vectors of real or complex numbers.

CKKS encryption requires structuring plaintext to mirror input data. The \texttt{number of slots}, defined during parameter selection, determines the encryptable vector size. CKKS supports three fundamental ciphertext operations: addition, multiplication, and rotation.

Let $N$ be the total degree of the CKKS parameter, then the \texttt{number of slots} is $N/2$. Two real vectors $\mathbf{v_1}, \mathbf{v_2}$ and $C(\mathbf{v_1}), C(\mathbf{v_2})$ denote the ciphertext of the vectors $\mathbf{v_1}, \mathbf{v_2}$. (P) indicates that the operation is defined between a ciphertext and a plaintext, and (C) indicates that the operation is defined between two ciphertexts. Then the operations, addition ($Add$), multiplication ($Mul$), and rotation $(Rot)$ can be represented as follows:

\begin{itemize}
    \item Addition (P): $Add(C(\mathbf{v_1}), \mathbf{v_2}) = C(\mathbf{v_1} + \mathbf{v_2})$
    \item Addition (C): $Add(C(\mathbf{v_1}), C(\mathbf{v_2})) = C(\mathbf{v_1} + \mathbf{v_2})$
    \item Multiplication (P): $Mul(C(\mathbf{v_1}), \mathbf{v_2}) = C(\mathbf{v_1} \times \mathbf{v_2})$
    \item Multiplication (C): $Mul(C(\mathbf{v_1}), C(\mathbf{v_2})) = C(\mathbf{v_1} \times \mathbf{v_2})$
    \item Rotation: $Rot(C(\mathbf{v}), r) = C(v_{r}, v_{r+1}, \ldots, v_{N/2 -1}, v_{0},$ $\ldots, v_{r-1})$, where $\mathbf{v} = (v_{0}, v_{1}, \ldots, v_{N/2-1})$ and $r$ is a positive integer.
\end{itemize}

The operations $+$ and $\times$ represent elementwise addition and multiplication in the plaintext space, respectively. The \texttt{depth} parameter specifies the maximum number of sequential multiplications that can be performed while ensuring correct decryption, as each multiplication operation increases ciphertext noise. Exceeding this limit may result in decryption failure. While \texttt{depth} is predetermined during parameter selection, \texttt{level} dynamically indicates the remaining multiplication capacity, decreasing with each multiplication operation executed. 

For practical HE implementations, several general-purpose HE libraries are available, including SEAL-Python~\cite{huelseseal}, Lattigo~\cite{lattigo}, HElib~\cite{halevi2014algorithms}, and OpenFHE~\cite{openFHE}. We selected SEAL-Python for its efficient support of the CKKS scheme. Building on this foundation, we developed \sysname, a HE-based framework specifically optimized for CNN inference. While currently implemented with SEAL-Python, \sysname is designed to be adaptable to other libraries.

\section{Overview of \sysname}

\newtheorem{theorem}{Theorem}
\newtheorem{lemma}[theorem]{Lemma}
\newtheorem{proposition}[theorem]{Proposition}
\newtheorem{corollary}{Corollary}[theorem]

\label{sec:introduction_to_framework}

In this section, we introduce \sysname, a HE-based framework designed for inference on encrypted data. \sysname uses the CKKS HE scheme to encrypt input data, facilitating its integration into any CNN models without needing the \texttt{im2col} function, which demands a specific input shape. To minimize the computational overhead of HE-based inference, we employ three innovative techniques:

\begin{enumerate}
    \item Unlike previous approaches~\cite{tenseal2021, disabato2020privacy, halevi2014algorithms} that rely on the input size, \sysname calculates the total number of HE operations for the fully connected layer based on the output size. As the output size is typically smaller than the input size in the fully connected layer, this method reduces the average time per operation.
    \item \sysname enables batch operations, allowing the consolidation of multiple ciphertexts into one for more efficient processing. This considerably reduces the overall computational time, marking a significant advantage for \sysname in large-scale data processing scenarios.
    \item \sysname opts for average pooling, which eliminates the need for multiplication, thereby reducing operation time. This configuration allows for more layers within given parameter settings, giving service providers more flexibility to incorporate average pooling without concern for operation count.
\end{enumerate}

\begin{figure}[t]
\centerline{\includegraphics[trim=0cm 0cm 0cm 0cm, clip=true, width=0.99\columnwidth]{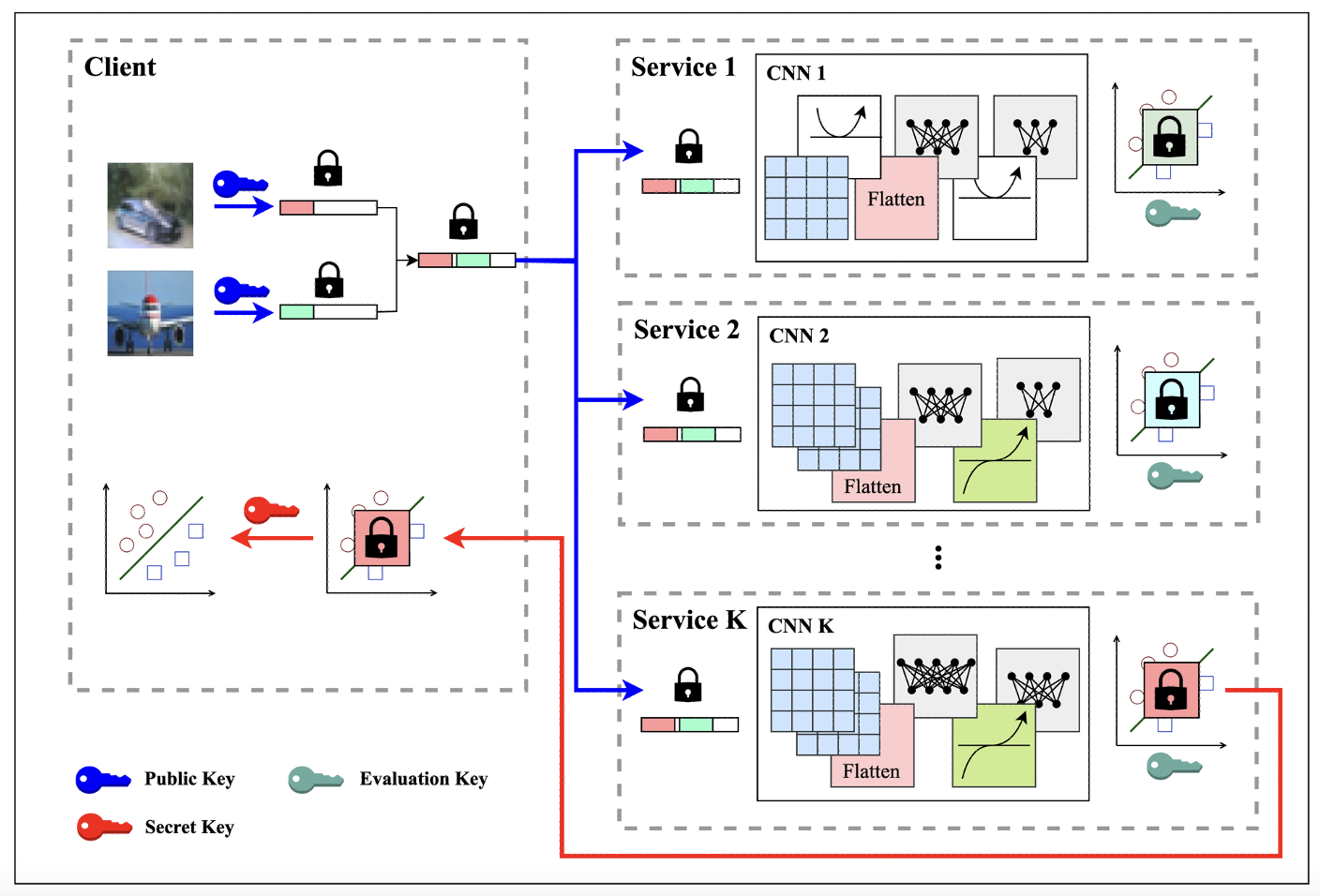}}
\caption{Overview of the \sysname architecture.}
\label{General_overview}
\end{figure}

Figure~\ref{General_overview} presents a high-level overview of \sysname's operational flow. The process begins when a client encrypts one or more images using a public key. These encrypted images are consolidated into a single ciphertext and sent to a cloud service specializing in data analytics. Each service (i.e., Service 1, Service 2, \dots, Service $K$) uses its unique CNN model for data processing, executing specific algorithms and computations on the received ciphertext with the evaluation key. Each CNN model has a distinct layer architecture and optimized parameters. After processing the ciphertext through their respective CNN models, the encrypted inference results are returned to the client. The client decrypts these results using the corresponding secret key to obtain the processed outcomes from each CNN model. This framework allows the client to benefit from multiple CNN models while preserving the confidentiality of the data, indicating a significant step forward in privacy-preserving machine learning. Table~\ref{table:notation} summarizes the notations used throughout this paper.

\begin{center}
\begin{table}[!t]
\setlength{\tabcolsep}{6pt}
\centering
\caption{Notations used for \sysname.}
\resizebox{0.99\linewidth}{!}{ 
\begin{tabular}{l p{6cm}}
\noalign{\smallskip}\noalign{\smallskip}\hline
Notation & Definition \\
\hline \hline
$N$ & Degree of polynomial. It is in the form of a power of two.\\
\hline
$D, L$ & Depth and level of ciphertext. This represents the current remaining limit on the number of multiplication operations allowed for the ciphertext. \\
\hline
$CH_{in}, CH_{out}$ & Number of input and output channels in layer \\
\hline
$DAT_{in}, DAT_{out}$ & Number of input and output data in FC layer \\
\hline
$R_{q}$ & Polynomial quotient ring $(\mathbb{Z}_q/ \langle X^N + 1 \rangle)$. It is used to create a vector space of the ciphertext. \\
\hline
$W_{img}, H_{img}$ & Width and height of the input image data \\
\hline
$K$ & Width and height of the kernel \\
\hline
$W_{in}, H_{in}$ & Width and height of input data from this layer \\
\hline
$W_{out}, H_{out}$ & Width and height of output data from this layer \\
\hline
$S_{total}$ & It is used in Section~\ref{Flatten layer converter}. The value is the multiplication of each convolutional layer's stride and the kernel sizes of all average pooling layers.  \\
\hline

\end{tabular}
}
\label{table:notation}
\end{table}
\end{center}

\renewcommand{\arraystretch}{1.3}


\subsection{Construction of Input Data}
\label{Construction of input data}

There is a variety of data types for ML services. For instance, while image data is two or three-dimensional, statistical data is usually one or two-dimensional. To handle such varying data dimensions, \sysname initially flattens the input data into a one-dimensional array in row-major format. When performing encryption, the data is located from the first of the list, and the remaining space is filled with zero to perform encryption. The remaining space will be used in the batch operations.

Figure~\ref{encryption_of_flattened_data} shows an example of data transformation for encryption. The input data consists of $W_{img} \times H_{img}$ numbers. We flatten the original data row by row, as shown in Figure~\ref{encryption_of_flattened_data}. This flattening procedure ensures universal compatibility of the input data with ML models, enhancing adaptability to algorithms used by ML services.

\begin{figure}[!ht]
\centerline{\includegraphics[trim=0cm 0.1cm 0cm 0cm, clip=true, scale=.3]{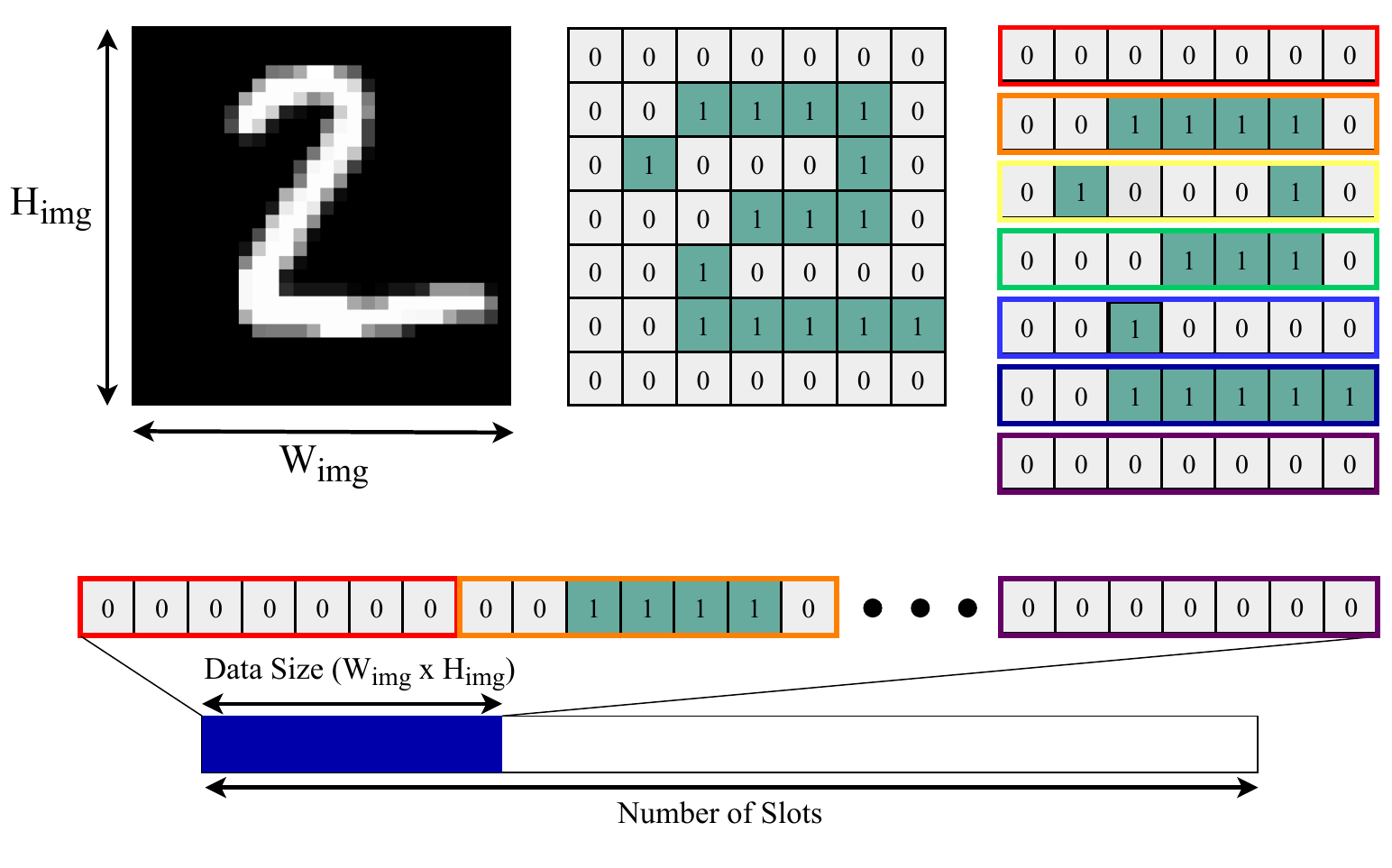}}
\caption{Data transform for \sysname.}
\label{encryption_of_flattened_data}
\end{figure}

\subsection{Combining Ciphertexts for Batch Operation}
\label{Combine input ciphertext with the compress method}

The technique in \sysname combines multiple encrypted input data into a single ciphertext for batch processing. Incorporating multiple encrypted data into a single ciphertext involves sequentially inserting each encrypted data, ensuring enough space to prevent overlap with other encrypted data, as shown in Figure~\ref{encryption_for_batch_data}. This is achieved by rotating each encrypted data before adding it to the ciphertext.

Importantly, the size allocation is not solely based on the input size of the encrypted data but also considers the size of the intermediate or final output vector generated by performing CNN operations. By examining the given CNN model structure, we can pre-calculate this size in advance, thereby preventing the overlap of input data results.

Figure~\ref{encryption_for_batch_data} illustrates one example. In this case, 800 slots are needed to process the input data through the CNN model. Thus, the $i$-th input data would be rotated to the right by $(i-1) \times 800$ slots, and the rotated vector is added to the ciphertext.
\sysname can integrate any number of ciphertexts into a single one, provided the total size of the encrypted data does not exceed the number of slots. This method enhances the efficiency of \sysname when handling multiple ciphertexts concurrently.

\begin{figure}[!ht]
\centerline{\includegraphics[trim=0cm 0.1cm 0cm 0cm, clip=true, , width=.8\columnwidth]
{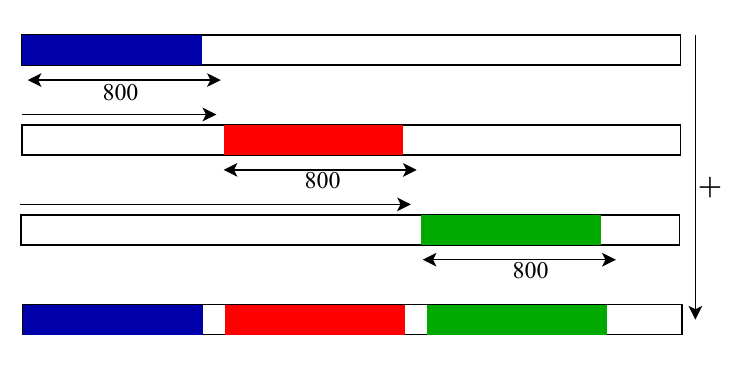}}
\caption{Combining multiple ciphertexts into one.}
\label{encryption_for_batch_data}
\end{figure}

\section{CNN Model Construction}

\subsection{Construction of the Drop-DEPTH}
The parameter is pre-defined because UniHENN's design philosophy is to support the ciphertext that is encrypted before the model is fixed. So, the CKKS parameter is determined and encrypted before the CNN model is defined for reusability of the input ciphertext independent of the model. But, the multiplication and rotation time of HE is increased dependent on the \textit{level}. Thus, the best choice of the CKKS parameter is determined by the number and structure of the CNN model's layers. However, in this scenario, the \textit{depth} is defined before when the CNN model is fixed, which means that the \textit{depth} can be larger than the exact number of total levels of the CNN's model. Thus, we suggest the \textit{Drop-Level}, a novel approach to reduce the \textit{level} that is optimized when the target CNN model. If the input ciphertext has \textit{depth} $D$ and the target CNN model's level is $L$ where $L \leq D$ then the \textit{Drop-Level} reduces the \textit{level} of the input ciphertext as $L$ with multiply $D-L$ ciphertexts that are the encryption of vectors where all elements are consists of $1$. By these simple structures, the time of execution is very fast (In our experiments in Section~\ref{experiment}, the time of \textit{Drop-Level} operation does not exceed $100 ms$. 

\subsection{Construction of the Convolutional Layer}
\label{Construction of the Convolutional Layer}

The \texttt{im2col} encoding is a popular and efficient algorithm for transforming multidimensional data into matrix form to facilitate convolution operations. However, it requires precise arrangement of input matrix elements, which is challenging to achieve on ciphertexts. Therefore, TenSEAL~\cite{tenseal2021}, a sophisticated open-source HE-enabled ML SDK, performs this operation in a preprocessing step before encryption, allowing only a single convolutional layer.

To simplify this process and improve flexibility, we propose a new method for constructing convolutional layers without relying on \texttt{im2col}. Our approach uses flattened input data, eliminating the need for a specific arrangement of input matrix elements as follows:

\begin{enumerate}
    \item \textbf{Kernel and Stride}: In convolutional layers, a small matrix called a kernel moves over the input image. Each movement of the kernel is called a stride. At each position, the kernel multiplies its values with the corresponding values in the input image and sums the results.
    
    \item \textbf{Input and Kernel Dimensions}: The dimensions of the input image are denoted as $(W_{img}, H_{img})$, with a stride of $S$. For simplicity, we assume the kernel has equal width and height, both represented by $K$.
    
    
    \item \textbf{Flattening the Input}: In the context of HE, we flatten the two-dimensional input data into a one-dimensional form. This simplifies the convolution process under encryption.
    
    \item \textbf{Rotation and Multiplication}: The convolution operation involves rotating the input data and multiplying it with the kernel. The number of rotations equals the number of elements in the kernel. Each element in the input data is multiplied by the corresponding element in the kernel, and the results are summed up.
    
    \item \textbf{Output}: The output of the convolution is an array with values spaced according to the stride interval.
\end{enumerate}

Figures \ref{figure:BasicConv} and \ref{figure:Conv2He} illustrate the convolution operations on two-dimensional and flattened input data, respectively.

\begin{figure}[!ht]
\centerline{\includegraphics[trim=0cm 0.1cm 0cm 0cm, clip=true, width=1\columnwidth]{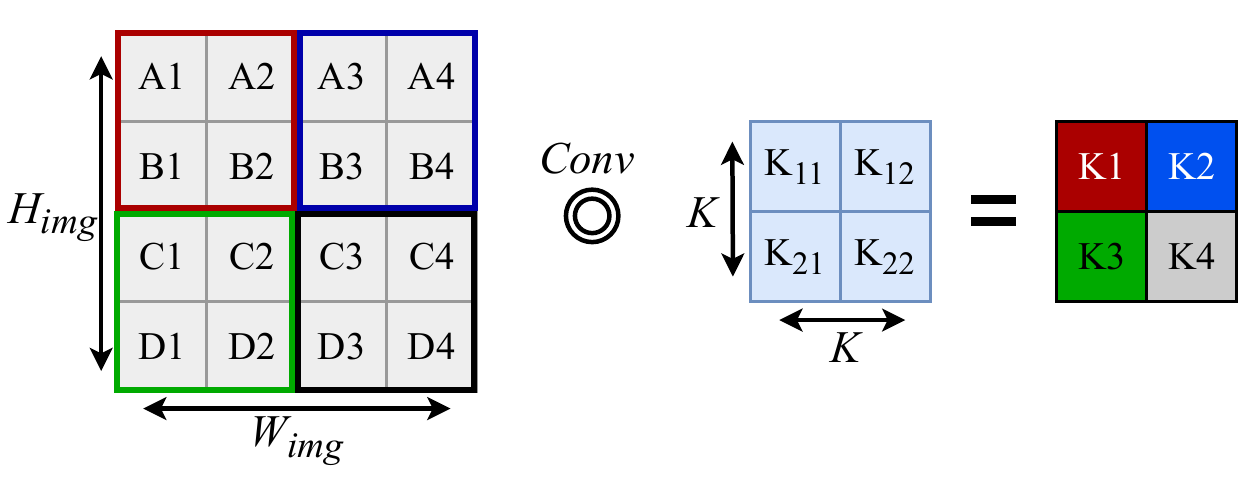}}
\caption{Convolution operations on the two-dimensional input data and kernel.}
\label{figure:BasicConv}
\end{figure}

\begin{figure}[!ht]
\centerline{\includegraphics[width=1\columnwidth]{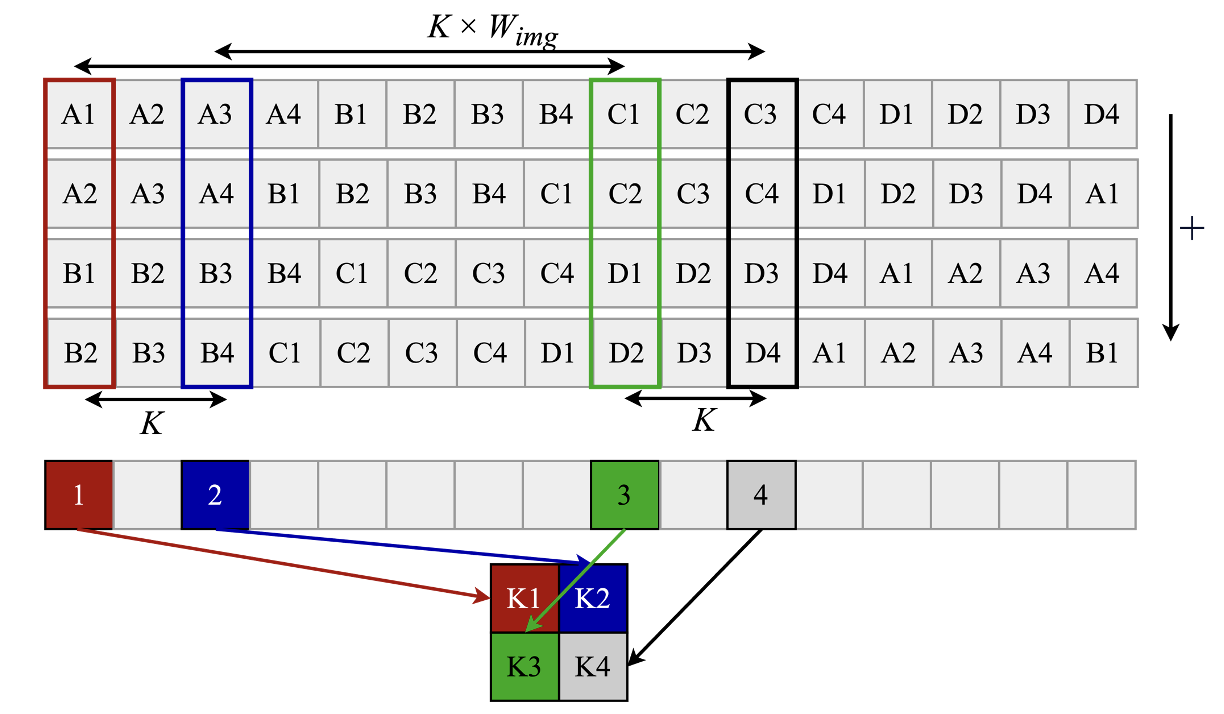}}
\caption{Convolution operations on the flattened data.}
\label{figure:Conv2He}
\end{figure}

Figure~\ref{figure:Conv2He} illustrates the convolution process: the input data is incrementally shifted by $S$, $K$ times, with leftward rotations after each shift. These rotated ciphertexts are then multiplied by a specially structured kernel vector containing a single kernel element, effectively eliminating unrelated elements during convolution. Algorithm~\ref{algorithm:convolution layer} provides a detailed description of this convolution layer construction.

\begin{algorithm}[!ht]
  \caption {Construction of the convolution layer}
  \textbf{Input: }
  \begin{algorithmic}
    \State - List of Ciphertext :
    \State \; $C_{in} := (C_{in(0)}, C_{in(1)}, \dots, C_{in(CH_{in}-1)}) \in (\mathcal{R}_{q}^{2})^{CH_{in}}$
    \State - Kernel: $Ker:= (Ker_{(o,i)})_{0 \leq o < CH_{out}, 0 \leq i < CH_{in}}$ 
    \State \qquad \qquad \qquad \qquad $\in \mathbb{R}^{CH_{out} \times CH_{in} \times K \times K}$
    \State - Bias: $B := (B_{(o)})_{0 \leq o < CH_{out}} \in \mathbb{R}^{CH_{out}}$
    \State - Interval: $I_{in}$
    \State - Stride: $S$ 
  \end{algorithmic}

  \textbf{Output: } 
  \begin{algorithmic}
    \State - List of Ciphertexts : $C_{out} \in (\mathcal{R}_{q}^{2})^{CH_{out}}$
    \State - Interval : $I_{out}$
  \end{algorithmic}

  \textbf{Procedure: } 
  *(All ciphertexts and operations are in $\mathcal{R}_{q}^{2}$)
    
    \begin{algorithmic}
        \For{$i=0$ to $CH_{in}-1$}
            \For{$j=0$ to $K-1$}
                \For{$k=0$ to $K-1$}
                    \State $C_{rot(i, j, k)}$ 
                    \State $\leftarrow Rot\left( C_{in(i)}, I_{in} \times (k + W_{img} \times j) \right)$
                \EndFor
            \EndFor
        \EndFor
        \For{$o=0$ to $CH_{out}-1$}
            \State $C_{out(t)}$ 
            \State $\leftarrow \sum_{i=0}^{CH_{in}-1} \sum_{j=0}^{K-1} \sum_{k=0}^{K-1} Mul \left(C_{rot(i,j,k)}, K_{(o,i,j,k)} \right)$
             
            \State $C_{out(o)} \leftarrow Add(C_{out(o)}, B_{(o)})$
        \EndFor
        \State $C_{out} \leftarrow \left(C_{out(0)}, \dots, C_{out(CH_{out}-1)}\right)$
        \State $I_{out} \leftarrow I_{in} \times S$ \\
        \Return{$C_{out}, I_{out}$}
    \end{algorithmic}
\label{algorithm:convolution layer}
\end{algorithm}

This approach improves the convolution process by decoupling the number of operations from the image size and limiting rotations to the square of the kernel size $K$.

The number of plaintext multiplications in the convolution layer of our method is $\mathcal{O}(CH_{in} \cdot CH_{out} \cdot K^{2})$ and the number of rotations is $\mathcal{O}(CH_{in} \cdot K^{2})$. Considering the complexity of plaintext multiplication $\mathcal{O}(N \cdot L)$ and rotation $\mathcal{O}(N \cdot \log N \cdot L^{2})$ as stated in~\cite{lee2023optimizations}, the overall complexity of the convolutional layer is $\mathcal{O}(N \cdot L \cdot CH_{in} \cdot K^{2} \cdot (CH_{out} + \log N \cdot L))$.


\subsection{Construction of the Average Pooling Layer}
\label{construction_of_the_average_pooling_layer}
CNNs use pooling layers to reduce input size, with max, min, and average pooling being common. Our architecture prioritizes high-speed CNN inference without bootstrapping. Ablation studies on a LeNet-5-like model revealed comparable accuracies for min (0.989), max (0.992), and average (0.985) pooling. In HE, max and min pooling require numerous multiplications~\cite{lee2023precise}, necessitating costly bootstrapping, while average pooling needs only one multiplication. Given the negligible performance difference and significant computational savings, we implemented average pooling in \sysname.

We optimize the average pooling layer by eliminating the constant multiplication by $1/c^2$ (where $c$ is the kernel size). Instead, we apply this multiplication in a preceding convolutional or flatten layer. This approach maintains functionality while reducing the number of multiplications and lowering depth, enhancing overall efficiency.

The implementation depends on the subsequent layer. If followed by an activation layer, the constant multiplication is moved to the preceding flatten or convolutional layer. For a linear layer $h(x) = Ax + b$, we use $h(x/c^2) = (1/c^2)Ax + b$, incorporating the scaling factor into the weight matrix. This strategy preserves the average pooling effect while optimizing computational resources.

When an activation function such as the square function or the approximate ReLU function follows average pooling, we cannot apply the above logic directly because these functions are not linear. In this case, we can apply the following logic:

\begin{itemize}[leftmargin=*]
\item If the activation function is the square function:
Then, $(1/c^2)^2 = 1/c^4$ is applied the following in the convolutional layer or the flatten layer.
\item If the activation function is the approximate ReLU function:
Then, we can apply the coefficient of the approximate ReLU ${f(x/c^2)}= 0.375373 + (0.5/c^2)x + (0.117071/c^4)x^{2}$ if the approximate ReLU is defined as ${f(x)}= 0.375373 + 0.5x + 0.117071x^{2}$.
\end{itemize}

The mechanism of average pooling closely resembles that of the convolutional layer. Suppose an average pooling operation is conducted with a kernel size of $c$. We can then apply a convolutional layer with $c$ as both the kernel size and stride and use a constant multiplication of $1/c^2$ as described. The overview of this logic is shown in Figure~\ref{figure:Conv2He}.

However, unlike the convolutional layer, \textit{invalid values} are introduced in the average pooling layer because the flattened kernels are not multiplied, as illustrated in Figure~\ref{figure:kernel_construction}. Due to this issue, the gap between data requires maximum rotation to the left. The exact interval is as follows:

\begin{align}
    (W_{img} + 1) \times (c-1)
\end{align}

\begin{figure}[!ht]
\centerline{\includegraphics[trim=0cm 0.1cm 0cm 0cm, clip=true, width=1\columnwidth]{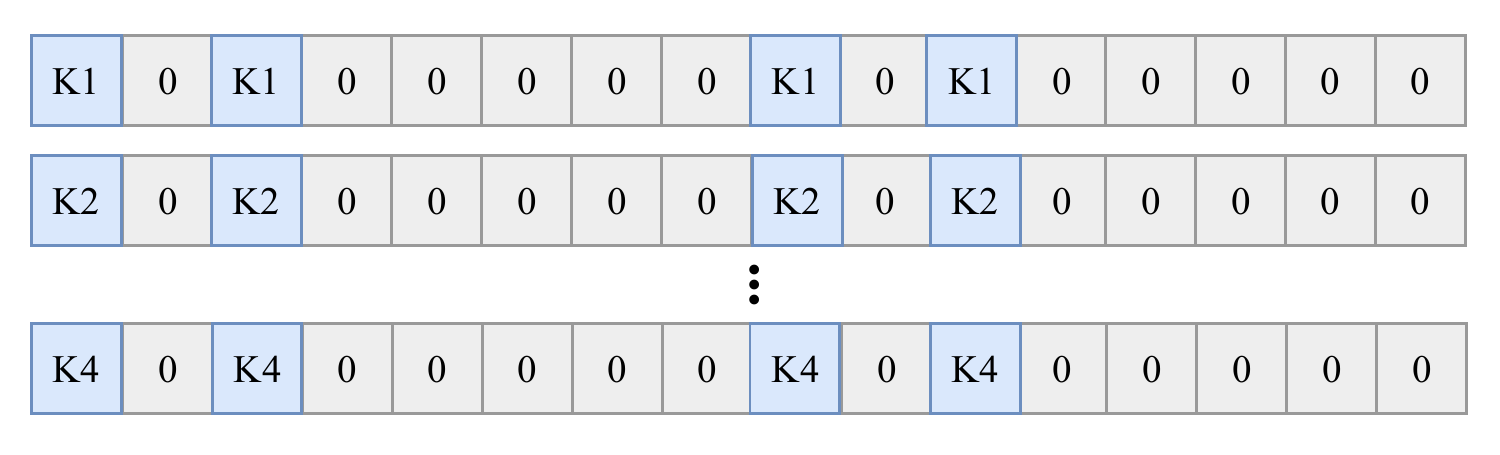}}
\caption{Flattening a two-dimensional kernel into a one-dimensional array.}
\label{figure:kernel_construction}
\end{figure}


\subsection{Construction of the Fully Connected (FC) Layer}
\label{Fully Connected layer converter} 

In an FC layer, the input data is multiplied by a weights matrix, as depicted in Figure~\ref{figure:fc_layer_mechanism} (a). PycrCNN performs the same procedure as a plain FC layer by multiplying encrypted data with the weights matrix. However, in this case, matrix multiplication requires significant computational power and time.


\begin{figure}[!ht]
    \centering
    \subfigure[PyCrCNN]{      \includegraphics[width=\linewidth]{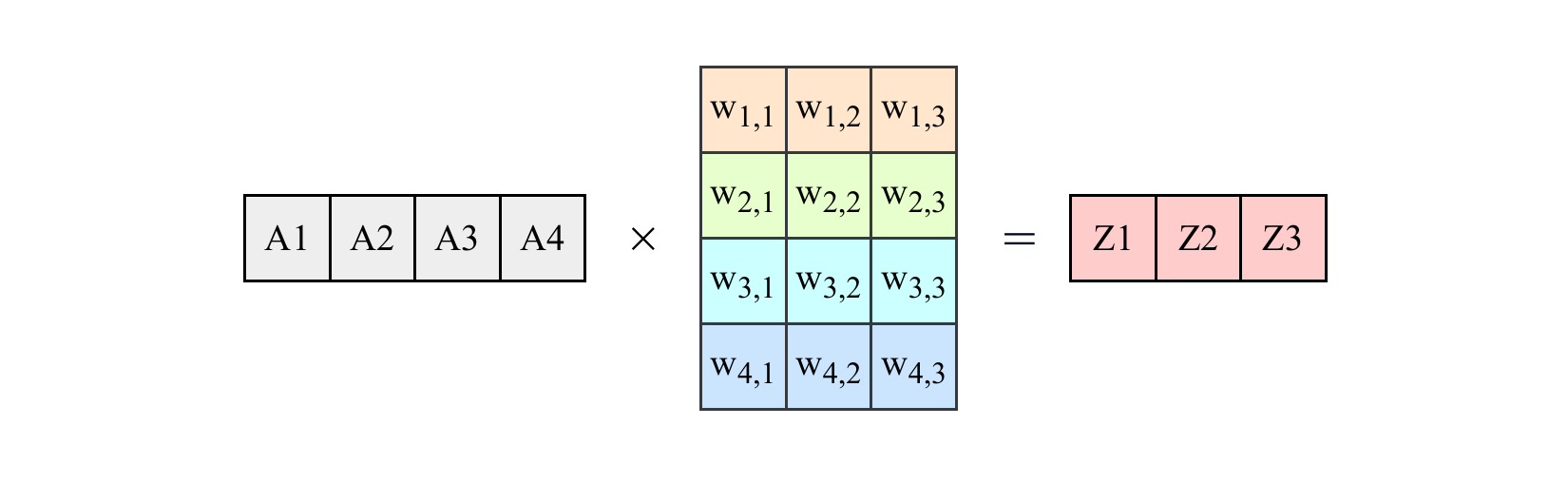}
        }
    \subfigure[TenSEAL]{      \includegraphics[width=\linewidth]{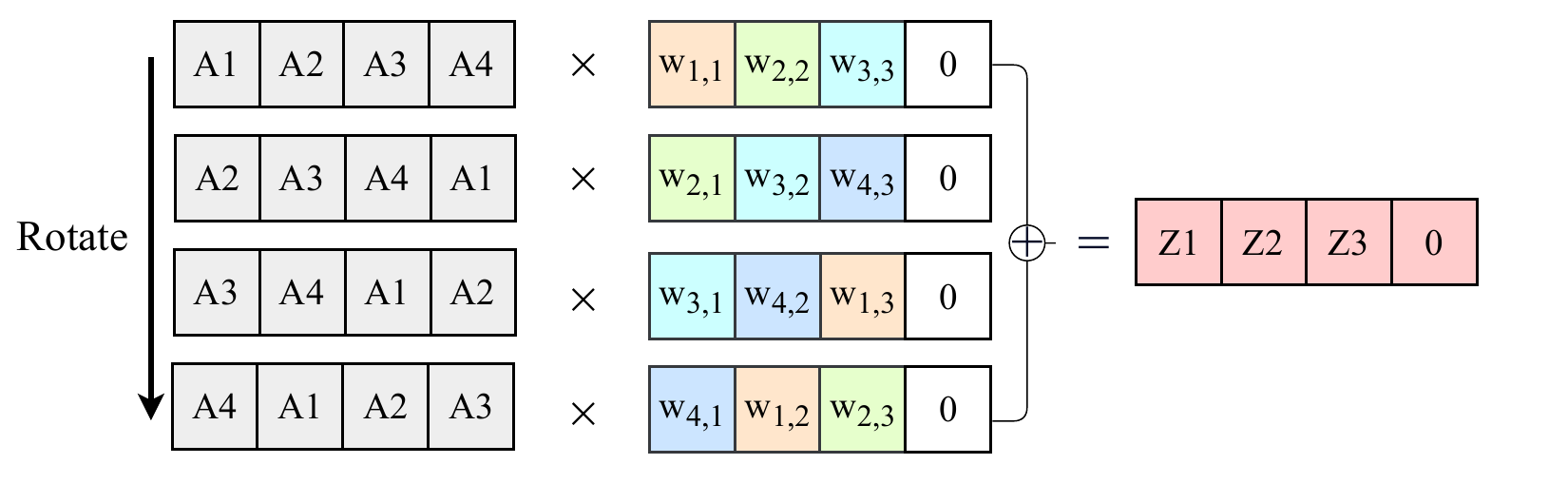}
        }
    \caption{Structure of the FC layer of (a) PyCrCNN and (b) TenSEAL.}
    \label{figure:fc_layer_mechanism}
\end{figure}

Employing the vector multiplication technique~\cite{halevi2014algorithms} offers a way to facilitate FC layer operations on the ciphertext. As depicted in Figure~\ref{figure:fc_layer_mechanism} (b), TenSEAL employs the diagonal method for vector-matrix multiplication~\cite{tenseal2021, halevi2014algorithms}. In this approach, rotated data is multiplied by a rotated weights vector. To utilize vector multiplication, the width of the weight matrix is set to the input size by padding with zeros. Consequently, the number of multiplications and rotations depends on the FC layer's input size. Generally, the input size of the FC layer is larger than its output size, which may consume unnecessary resources. To minimize the number of resource-intensive operations, including multiplication and rotation, we optimized the logic to depend on the FC layer's output size.



\begin{figure}[!ht]
    \centering
    \subfigure[Step 1]{      \includegraphics[width=0.4\linewidth]{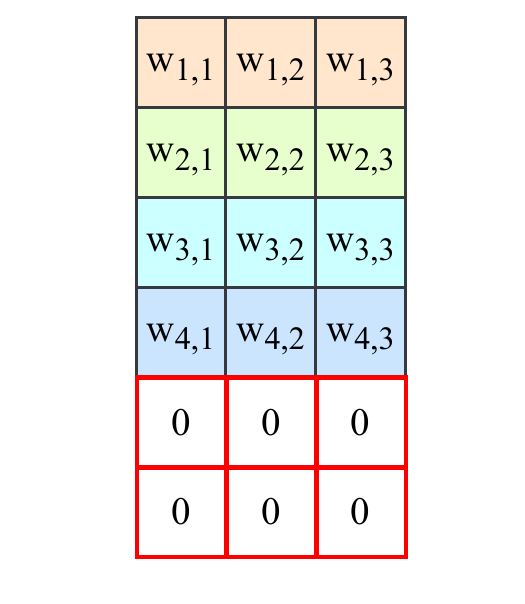}
        }
    \subfigure[Step 2]{      \includegraphics[width=0.4\linewidth]{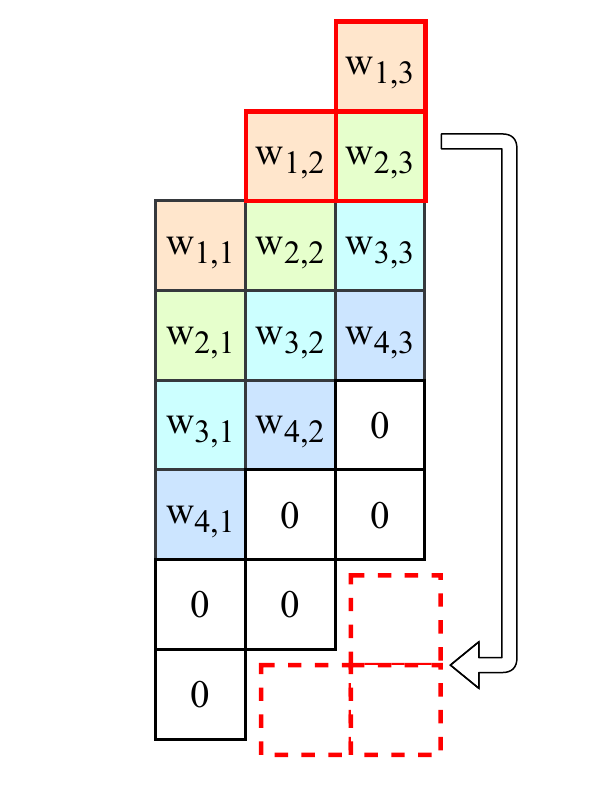}
        }
    \subfigure[Step 3]{      \includegraphics[width=0.4\linewidth]{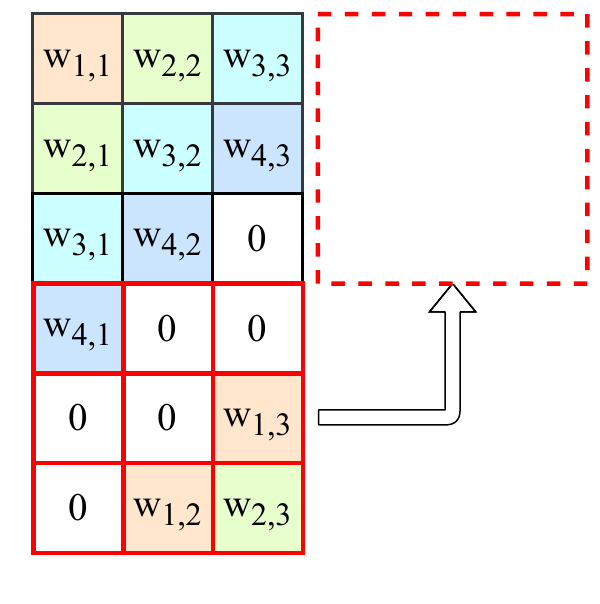}
        }
    \subfigure[Step 4]{      \includegraphics[width=0.4\linewidth]{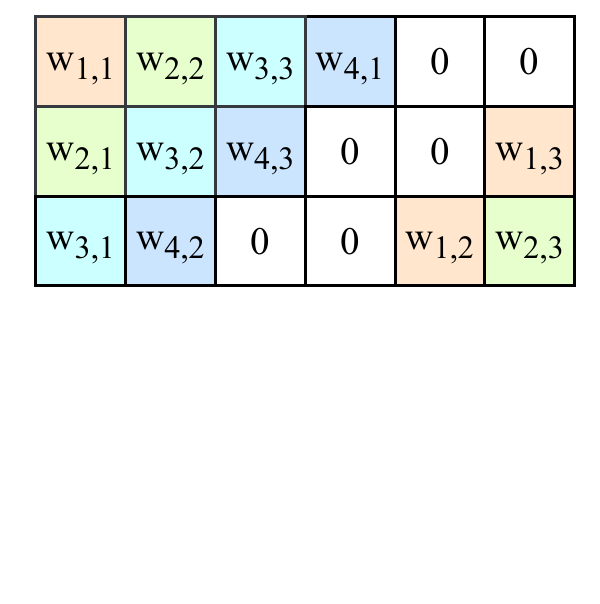}
        }
    \subfigure[Step 5]{      \includegraphics[width=\linewidth]{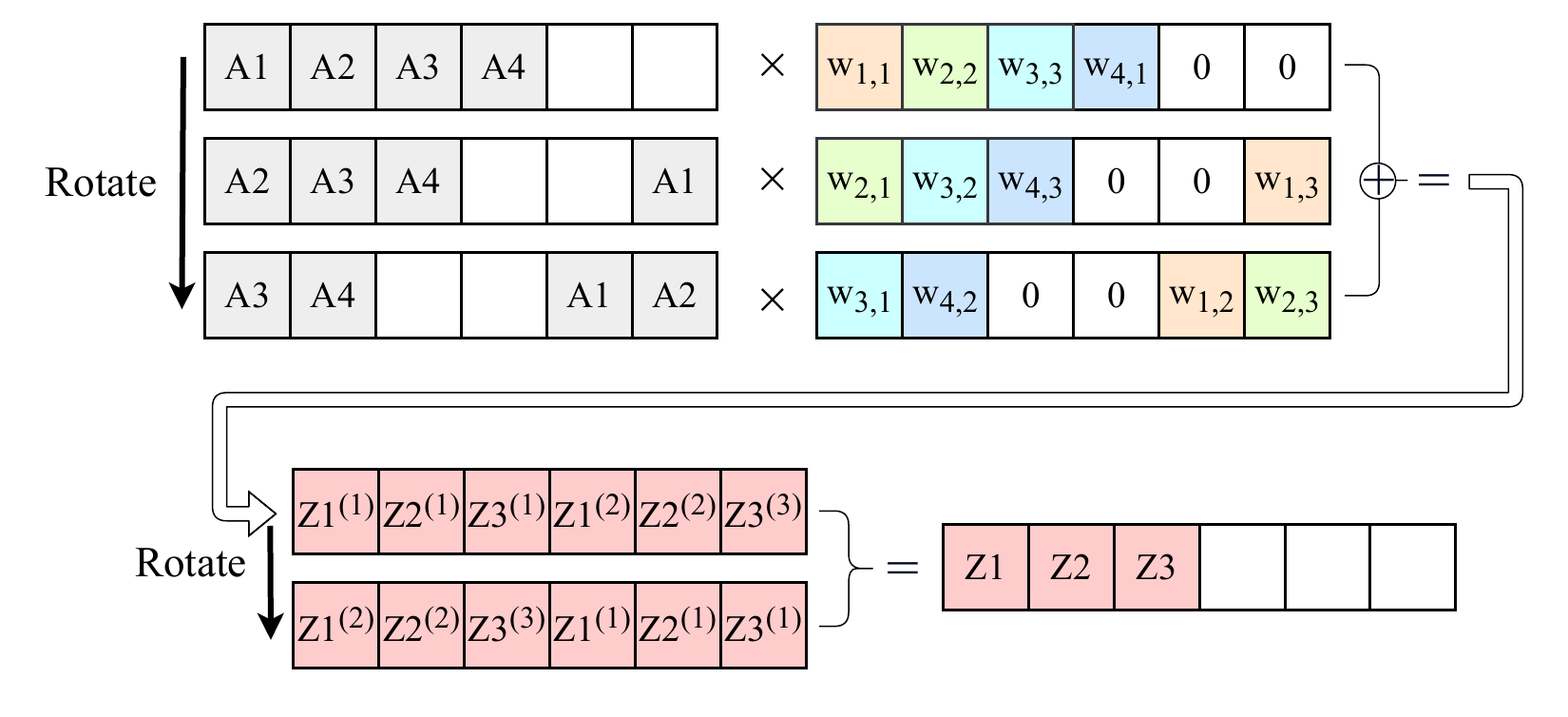}
        }
    \caption{Construction of FC layer of \sysname.}
    \label{figure:vec_multiplication}
\end{figure}

In Figure~\ref{figure:vec_multiplication}, the overall process of vector multiplication in \sysname is illustrated. First, as shown in Figure~\ref{figure:vec_multiplication} (a), pad zeros until the number of rows is a multiple of the output size, and then rotate the columns of the weight matrix upward, incrementing one column at a time as shown in Figure~\ref{figure:vec_multiplication} (b). Next, append the segments of each array with a negative index to the end of the array. Finally, as depicted in Figure~\ref{figure:vec_multiplication} (c), divide the matrix lengthwise by the output size and concatenate the parts horizontally; the weight matrix \( M_w \) then becomes as shown in Figure~\ref{figure:vec_multiplication} (d). Note that generating an optimized weight matrix (from Figure~\ref{figure:vec_multiplication} (a) to Figure~\ref{figure:vec_multiplication} (d)) involves modifying the weights in plaintext; however, the operation time is negligible.

In the inference process, the input ciphertext is rotated and multiplied by the weight matrix $M_w$, as depicted in Figure~\ref{figure:vec_multiplication} (e). Matrix multiplication can be performed in encrypted form with only multiplication operations that depend on the output size. Generally, since the output size of the FC layer is smaller than the input size, operations can be more efficient than with the diagonal method. For example, if the FC layer has an input size of 64 and an output size of 10, then the total number of multiplications and rotations would be 10, not 64.

\begin{algorithm}[!ht]
  \caption{Construction of the fully connected layer}
  \textbf{Input:}
  \begin{algorithmic}
    \State - Ciphertext: $C_{in} \in \mathcal{R}_{q}^{2}$
    \State - Weight matrix: $M_w \in \mathbb{R}^{DAT_{out} \times DAT_{in}}$
    \State - Bias: $B \in \mathbb{R}^{DAT_{out}}$
  \end{algorithmic}

  \textbf{Output:}
  \begin{algorithmic}
    \State - Ciphertext: $C_{out} \in \mathcal{R}_{q}^{2}$
  \end{algorithmic}

  \textbf{Procedure:}
  *(All ciphertexts and operations are in $\mathcal{R}_{q}^{2}$)
  \begin{algorithmic}
    \State $M_{rot} \leftarrow$ Apply $M_w$ from step 1 to 4 in Figure~\ref{figure:vec_multiplication}
    \State $W_{vec} = \lceil DAT_{in} / DAT_{out} \rceil \times DAT_{out}$
    \State $C_{sum} \leftarrow C(0)$
    \For{$o=0$ to $CH_{out}-1$}
      \State $V_{1} \leftarrow M_{rot(o)}[:W_{vec}-o] + [0] \times o$
      \State $C_{sum} \leftarrow Add(C_{sum}, Mul(Rot(C_{in}, o), V_{1}))$
      \State $V_{2} \leftarrow [0] \times (W_{vec}-o) + M_{rot(o)}[W_{vec}-o:]$
      \State $C_{sum} \leftarrow Add(C_{sum}, Mul(Rot(C_{in}, o-W_{vec}), V_{2}))$
    \EndFor
    \State $C_{out} \leftarrow C(0)$
    \For{$i=0$ to $\lceil DAT_{in}/DAT_{out} \rceil - 1$}
      \State $C_{out} \leftarrow Add(C_{out}, Rot(C_{sum}, i \times DAT_{out}))$
    \EndFor
    \State $C_{out} \leftarrow Add(C_{out}, B)$
    \Return{$C_{out}$}
  \end{algorithmic}
\label{algorithm:fully_connected_layer}
\end{algorithm}

The number of plaintext multiplications of the FC layer in \sysname is $\mathcal{O}(DAT_{out})$, and the number of rotations is $\mathcal{O}(DAT_{in}/DAT_{out} + DAT_{out})$. This means that if $DAT_{in}$ is smaller than the square of $DAT_{out}$, it can be denoted as $\mathcal{O}(DAT_{out})$. In this case, the time complexity of the FC layer depends only on the output size. Therefore, the complexity of the FC layer in \sysname is $\mathcal{O}((DAT_{in}/DAT_{out} + DAT_{out}) \cdot N \cdot \log N \cdot L^{2})$.

\subsection{Construction of the Flatten Layer}
\label{Flatten layer converter}


The outputs of the convolutional layer (or average pooling layer) typically have an interval (the reason is detailed in Section~\ref{Construction of the Convolutional Layer}). Additionally, the output of the convolutional layer is usually sparse, as the output from each filter resides in a distinct ciphertext. In a flatten layer, the operation is performed to collect each ciphertext and integrate it into a single ciphertext. Sparse ciphertext leads to unnecessary time and memory consumption. To avoid this, we construct flatten layers to remove invalid data between valid data and to compress all convolutional computation results into a single kernel ciphertext.

\subsubsection{Removing the Row Interval}
\label{The process of removing the row interval}

\begin{figure}[!th]
\centerline{\includegraphics[trim=0cm 0.1cm 0cm 0cm, clip=true, width=1\columnwidth]{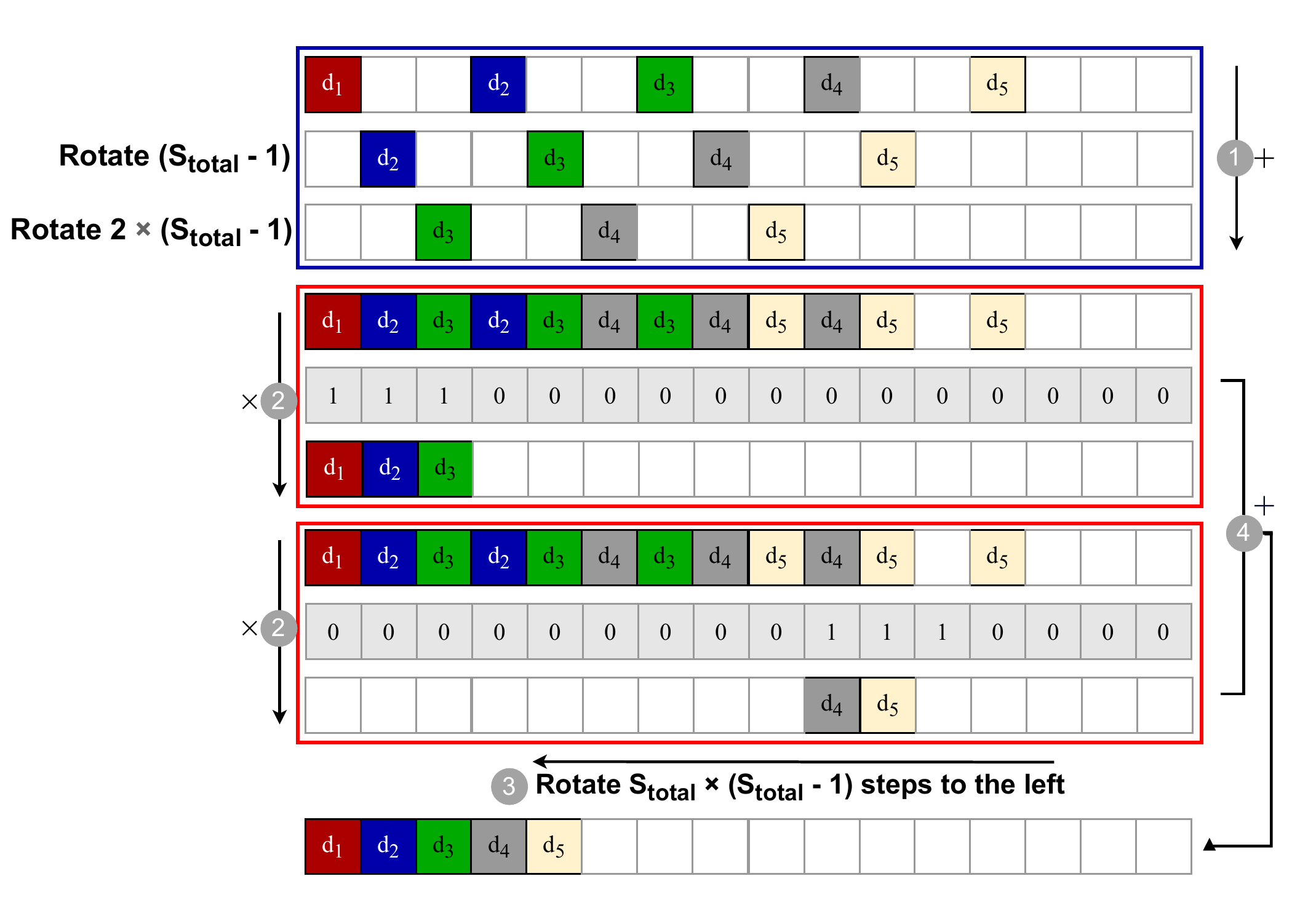}}
\caption{Removing row interval when the previous layer is either a convolutional layer or an approximate ReLU layer.}
\label{figure:flatten_layer3}
\end{figure}

After passing through the convolution layer or fully connected layer, the valid data in the output ciphertext will be offset, which we call row interval. We propose an algorithm to remove the row interval for two situations: whether the preceding layer is an average pooling layer or not. As detailed in Section~\ref{construction_of_the_average_pooling_layer}, the square layer does not handle constant multiplication, so it is not considered when identifying the previous layer.

If the previous layer is either a convolutional layer or an approximate ReLU layer, it may contain some zero values between each set of non-zero values. We can exploit this to efficiently execute the flatten operation by stacking the non-zero values using rotation and addition.

Additionally, if $W_{in} = 1$, there is only one column. Therefore, this removing row interval step can be skipped.

\begin{itemize}[leftmargin=*]
    \item In step {\footnotesize\CircledText{1}}, perform $(S_{total}-1)$ rotations $(S_{total}-1)$ times. Accumulate all the vectors produced from each rotation.
    \item In step {\footnotesize\CircledText{2}}, a vector containing $S_{total}$ ones and $(W_{img} - 1) \times S_{total}$ zeros is rotated in vector and multiplied $\lceil W_{in}/S_{total} \rceil$ times with the ciphertexts, allowing the extraction of $S$ data points at once.
    \item In step {\footnotesize\CircledText{3}}, after multiplications, the $i$th sparse vector is rotated $(i-1) \times S_{total} \times (S_{total} - 1)$ times to the left for $i \in \left[2, \lceil W_{in}/S_{total} \rceil \right]$.
    \item In step {\footnotesize\CircledText{4}}, all rotated vectors are added.
\end{itemize}


\subsubsection{Removing Invalid Data and Row Interval}~\label{sec:removing_pooling}

If the previous layer is an average pooling layer, \textit{invalid values} exist in the intervals between each set of valid data points. Moreover, the required constant multiplication (assuming a factor of $1/c^2$) has not been applied during the average pooling operation. Consequently, we must eliminate these invalid data points and perform the necessary multiplication. To address this issue, we suggest a three-step process as follows.

\begin{figure}[!th]
\centerline{\includegraphics[trim=0cm 0.1cm 0cm 0cm, clip=true, width=1\columnwidth]{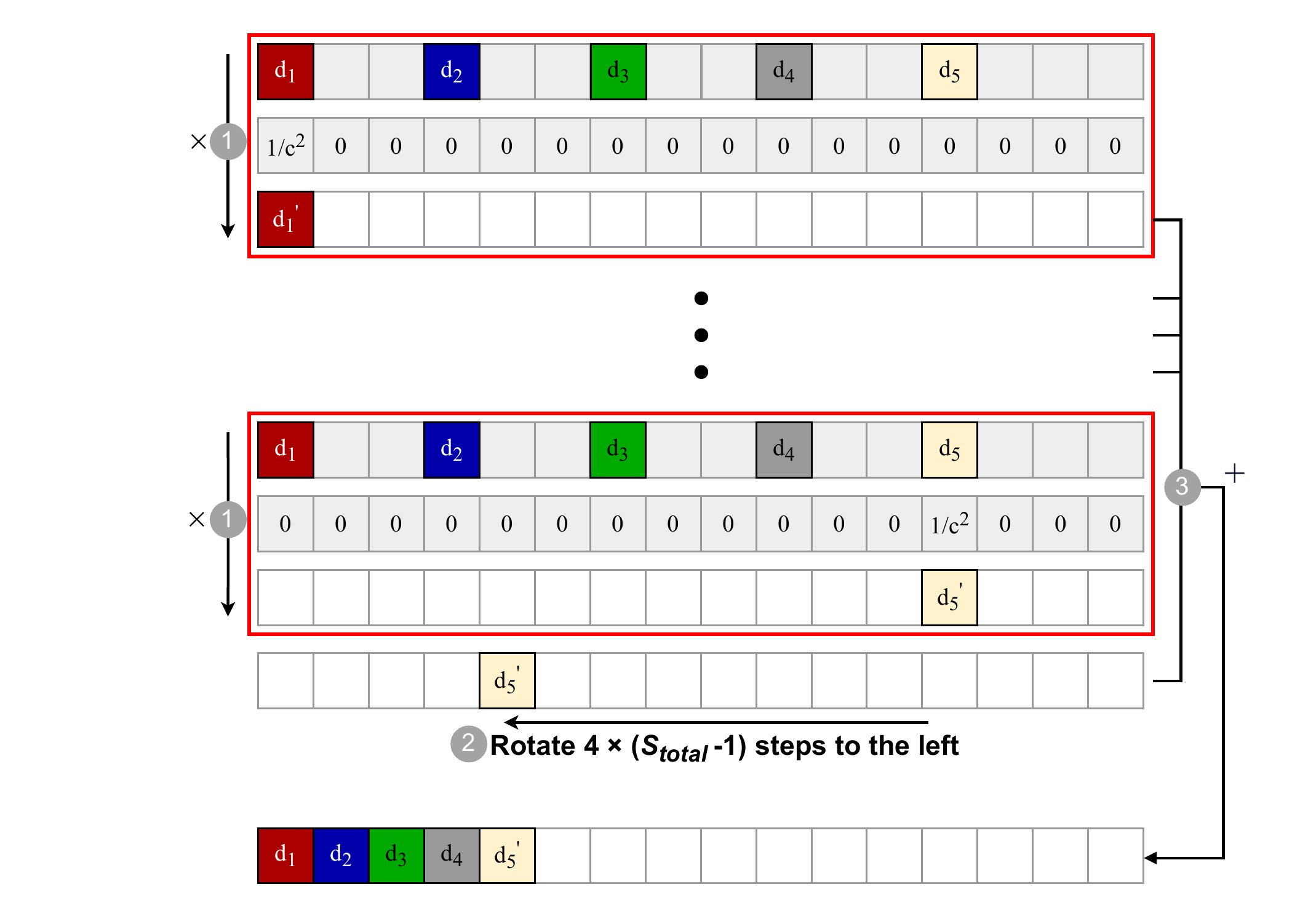}}
\caption{Removing invalid data between valid data when the previous layer is an average pooling layer.}
\label{figure:flatten_layer2}
\end{figure}

Figure~\ref{figure:flatten_layer2} {\footnotesize\CircledText{1}} illustrates the flatten layer's operational algorithm to remove invalid data. Unlike Figure ~\ref{figure:flatten_layer3}, some invalid values exist instead of the zero values when a flatten layer comes after an average pooling layer. Therefore, we can't operate data simultaneously. Thus, each value is needed to be calculated one by one.

Suppose $S_{total}$ is the value obtained by multiplying the stride values of all convolutional layers by the kernel sizes of all average pooling layers, $W_{img} \times H_{img}$ is the image size and $W_{in} \times H_{in}$ is the output size of the last average pooling layer.

\begin{itemize}[leftmargin=*]
    \item In step {\footnotesize\CircledText{1}}, a vector containing one $1/c^2$ and $(W_{img} \times S_{total}-1)$ zeros is rotated in vector and multiplied $W_{in}$ times with the ciphertexts.
    \item In step {\footnotesize\CircledText{2}}, after multiplication, the $i$-th sparse vector is rotated $(i-1) \times (S_{total} - 1)$ times to the left for $i \in \left[ 2, W_{in}  \right]$.
    \item In step {\footnotesize\CircledText{3}}, all rotated vectors are added.
\end{itemize}

\subsubsection{Removing the Column Interval}
\label{The process of removing the column interval}

\begin{figure}[!ht]
\centerline{\includegraphics[trim=0cm 0.1cm 0cm 0cm, clip=true, width=1\columnwidth]{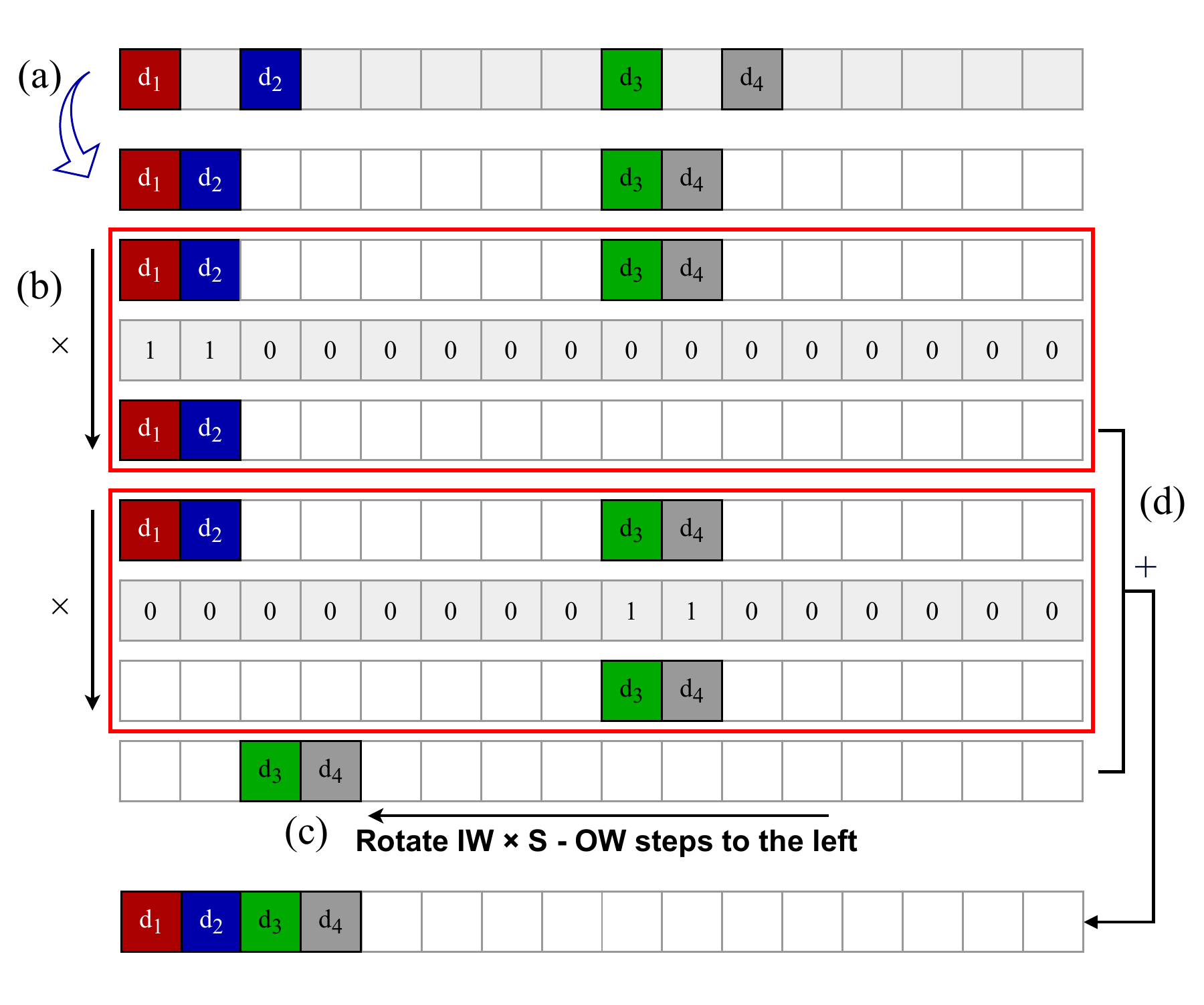}}
\caption{Overall mechanism for compressing all convolutional computation results into a single kernel ciphertext.}
\label{figure:flatten_layer1}
\end{figure}

As shown in Figure~\ref{figure:flatten_layer1}, the overall computation mechanism within the HE-based convolutional layer comprises four steps. 
After removing the row interval (step {\footnotesize\CircledText{1}}), it is apparent that the column interval still exists. For models utilizing 1D CNN or 2D CNN inference with $H_{in} = 1$, this removing column interval step can be skipped due to the presence of only one row. However, for 2D CNN models, an additional set of 3 steps is executed to remove the column interval, as indicated in Figure~\ref{figure:flatten_layer1}.

\begin{itemize}[leftmargin=*]
    \item In step {\footnotesize\CircledText{2}}, a vector containing $W_{img}$ ones and the rest as zeros is rotated and multiplied with the ciphertexts.
    \item In step {\footnotesize\CircledText{3}}, after the multiplication, rotate the $i$-th sparse vector $(i-1) \times (W_{img} \times S_{total} - W_{in}) $ times to the left, where $i \in \left[ 2, H_{in} \right]$.
    \item In step {\footnotesize\CircledText{4}}, sum up all the rotated vectors.
\end{itemize}

These additional steps help to remove the column intervals effectively, ensuring that the data is adequately flattened and ready for further processing in the CNN pipeline.

\subsubsection{Packing Multiple Ciphertexts}
\begin{figure}[!ht]
\centerline{\includegraphics[trim=0cm 0.1cm 0cm 0cm, clip=true, width=1\columnwidth]{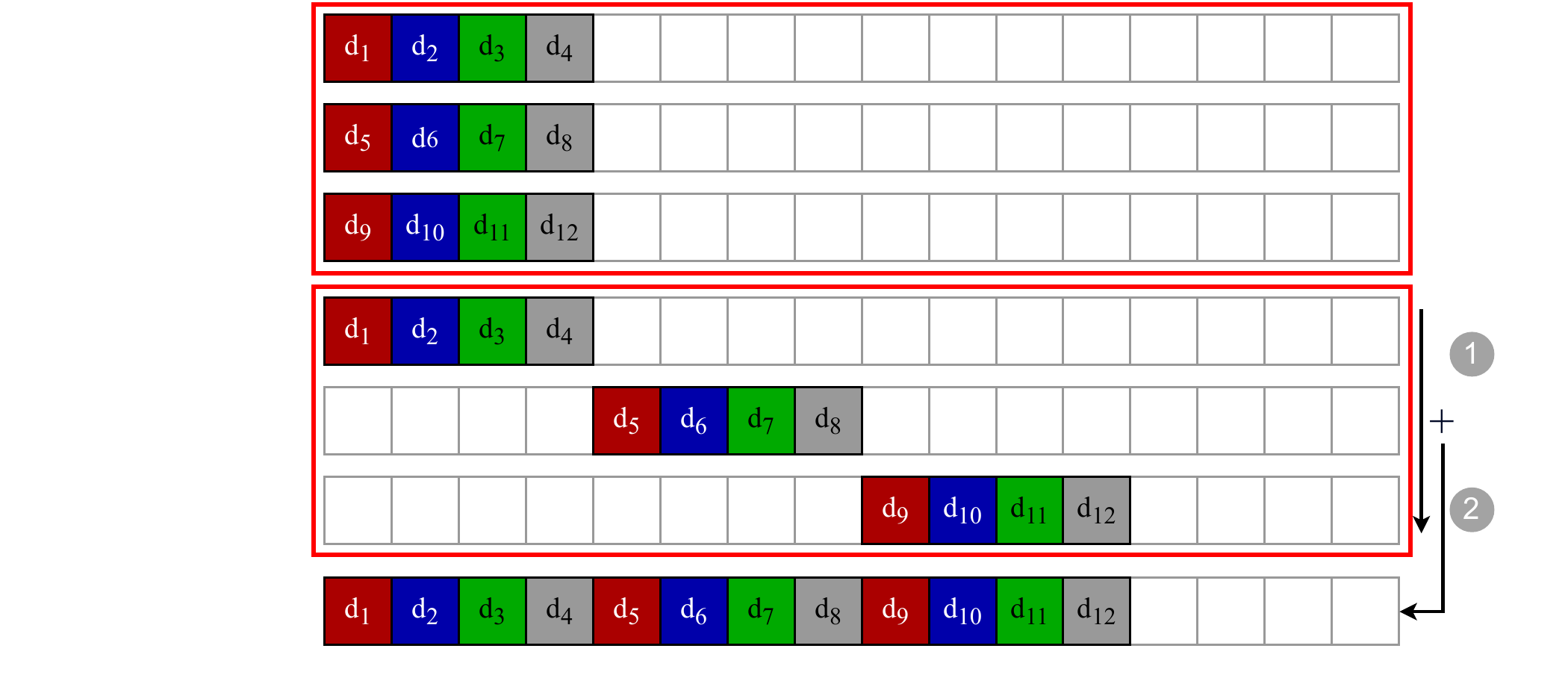}}
\caption{Overall mechanism for multiple ciphertexts into a single ciphertext.}
\label{figure:flatten_layer4}
\end{figure}
Figure~\ref{figure:flatten_layer4} shows the detail of the steps taken to combine the multiple output vectors produced by the convolution operation into a singular ciphertext. The descriptions are as follows:

\begin{itemize}[leftmargin=*]
\item In step {\footnotesize\CircledText{1}}, each of the output vectors is right-rotated $W_{in} \times H_{in}$ steps.
\item In step {\footnotesize\CircledText{2}}, all these rotated vectors are aggregated together. This aggregation step is particularly efficient because it avoids the need for multiplication operations, which are computationally intensive when applied to ciphertexts. This efficiency is due to the sparsity of the vector.
\end{itemize}

After completing this procedure, the outputs are flattened into a single ciphertext. The number of slots occupied in this ciphertext will be $W_{in} \times H_{in} \times CH_{in}$.

The convolutional layer will also require two multiplication operations when the stride $S$ exceeds one. However, if $S_{total}$ equals one, the process illustrated in Figure~\ref{figure:flatten_layer1} (a) becomes unnecessary. Thus, only a single multiplication operation would be needed. 

\subsubsection{Time Complexity}

The number of plaintext multiplications in the flatten layer of \sysname is $\mathcal{O}(W_{in} + H_{in})$ and the number of rotations in the flatten layer of \sysname is $\mathcal{O}(CH_{in} \cdot S_{total} + W_{in} + H_{in})$.

Therefore the complexity of flatten layer of \sysname is $\mathcal{O}((CH_{in} \cdot S_{total} + W_{in} + H_{in}) \cdot N \cdot \log N \cdot L^{2})$

  


\subsection{Construction of the Activation Layer} 

Activation functions require many multiplications~\cite{lee2023precise}, so it requires many computing resources to implement them with HE. To address this, we approximate each activation function using a low-degree polynomial. For instance, a 2-degree polynomial approximation of the ReLU function~\cite{HE_Accurate_CNN} is represented as ${f(x) = 0.375373 + 0.5x + 0.117071x^2}$. In \sysname, these approximated activation functions are used instead of the real activation functions for efficiency.

\section{Optimizing Ciphertext Size for Batch Processing} 

As shown in Figure~\ref{encryption_for_batch_data}, \sysname combines multiple encrypted input data into a single ciphertext for batch processing. However, the ciphertext size for each input should not be allocated solely based on the size of the input data. Instead, it should also consider the intermediate or final output values produced by performing CNN operations. This prevents overlapping issues with other encrypted input data's intermediate or final output values. We can determine this size in advance by examining the structure of the CNN model. 

This section will explain how the output size can be determined in each of the layers (convolutional layer, average pooling layer, flatten layer, and fully connected layer) implemented in \sysname.

\subsection{Convolutional Layer}

The size of the convolution layer's output data depends on three parameters: stride, padding, and kernel size.

\begin{itemize}[leftmargin=*]
    \item Stride: In the convolutional layer of \sysname, the interval between each row of data equals the product of the strides from all previous layers, while the interval between each column equals this product multiplied by the image width.
    
    \item Padding: In a convolutional layer with padding, extra space is required to add 0 elements. If there are $L$ convolutional layers with padding spaces $P_1, P_2, ..., P_L$, then allocate a space of length $P = P_1 + P_2 + ... + P_L$ in the input ciphertext. This space is used for each padding process. As padding is accounted for in the initial input image size, it is not considered when calculating the largest size in \textbf{Theorem} \ref{convolution_do_not_increse}.
    
    \item Kernel size: A larger kernel size results in a smaller output size, as explained in \textbf{Theorem}~\ref{convolution_do_not_increse}.    
\end{itemize}

The output size of the convolutional layer does not always exceed the input size when the kernel is greater than or equal to the stride. This is because the convolutional operation is designed to be calculated independently in each channel. The proof of this statement is described in \textbf{Theorem}~\ref{convolution_do_not_increse}. 

In \textbf{Theorem}~\ref{convolution_do_not_increse}, one condition is that the stride value is not larger than the kernel size. This is realistic, as a larger stride value would result in data loss from the image.


    
\begin{theorem}
\label{convolution_do_not_increse}
    Let $W_{img} \times H_{img}$ be the image size and $N_{layer}$ the number of total convolutional layers. Denote $(W_{in(i)}, H_{in(i)})$ and $(W_{out(i)}, H_{out(i)})$ as the input and output sizes of the $i$-th convolutional layer, and the kernel size is $(W_{ker(i)}, H_{ker(i)})$, stride size is $(W_{st(i)}, H_{st(i)})$, and padding size is $(W_{pad(i)}, H_{pad(i)})$ for all $i \in \left[1, N_{layer}\right]$. Additionally, let
    
    \notag
    \begin{align}
        H_{[st(i)]} = \prod\limits_{j=1}^i H_{st(j)}
    \end{align}

    Then, $H_{img} \geq H_{out(i)} \times H_{[st(i)]}$, where $H_{ker(i)} \geq H_{st(i)} \geq 1$ for all $i \in \left[1, N_{layer}\right]$.
\end{theorem}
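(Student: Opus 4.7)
The plan is to prove the inequality by induction on $i$, relying on the standard formula that relates the output height of a convolution to its input height, kernel size, and stride, namely $H_{out(i)} \leq \frac{H_{in(i)} - H_{ker(i)}}{H_{st(i)}} + 1$ (after padding has been folded into the effective input size, as the theorem permits). The hypothesis $H_{ker(i)} \geq H_{st(i)}$ will be used at every step to absorb the additive $+1$ once we multiply by $H_{st(i)}$.

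For the base case $i=1$, I would note $H_{in(1)} = H_{img}$ and compute
\begin{align}
H_{out(1)} \cdot H_{[st(1)]} = H_{out(1)} \cdot H_{st(1)} \leq H_{img} - H_{ker(1)} + H_{st(1)} \leq H_{img},
\end{align}
where the last inequality follows from $H_{ker(1)} \geq H_{st(1)}$. For the inductive step, I would assume $H_{out(i)} \cdot H_{[st(i)]} \leq H_{img}$ and observe that $H_{in(i+1)} = H_{out(i)}$ by the chaining of convolutional layers, so
\begin{align}
H_{out(i+1)} \cdot H_{st(i+1)} \leq H_{out(i)} - H_{ker(i+1)} + H_{st(i+1)} \leq H_{out(i)}.
\end{align}
Multiplying both sides by $H_{[st(i)]}$ and using $H_{[st(i+1)]} = H_{st(i+1)} \cdot H_{[st(i)]}$ together with the inductive hypothesis gives $H_{out(i+1)} \cdot H_{[st(i+1)]} \leq H_{out(i)} \cdot H_{[st(i)]} \leq H_{img}$, closing the induction.

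The main subtlety, rather than an obstacle, is justifying that the continuous upper bound $H_{out(i)} \leq (H_{in(i)} - H_{ker(i)})/H_{st(i)} + 1$ is the right handle to use: the actual output height involves a floor function, but dropping the floor only weakens the inequality in our favor, so the argument is unaffected. I would also briefly remark that the theorem statement omits padding from the recursion precisely because the excerpt pre-allocates $P = P_1 + \cdots + P_L$ of space inside the initial input ciphertext, so padding contributions have already been absorbed into $H_{img}$ and need not reappear in the induction. The symmetric argument for the width dimension is identical and can be omitted. This gives the desired bound $H_{img} \geq H_{out(i)} \cdot H_{[st(i)]}$ for all $i \in [1, N_{layer}]$.
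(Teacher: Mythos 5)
Your proposal is correct and follows essentially the same route as the paper: establish the per-layer inequality $H_{out(i)} \cdot H_{st(i)} \leq H_{in(i)}$ from the output-size formula using $H_{ker(i)} \geq H_{st(i)}$ to absorb the $+1$, then chain layers by induction via $H_{in(i+1)} = H_{out(i)}$. The only cosmetic difference is that you discard the floor immediately (which, as you note, only weakens the bound in your favor), whereas the paper carries the floor through and bounds $\left[H_{in(i)}/H_{st(i)}\right] \times H_{st(i)} \leq H_{in(i)}$ at the end.
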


\begin{proof}
    We obtain that $H_{out(i)}$ :
    \notag
    \begin{align}
    H_{out(i)} = \left[\frac{H_{in(i)} - H_{ker(i)}}{H_{st(i)}} \right] + 1
    \end{align}

    Given the hypothesis, the following inequality holds:
    
    \begin{align}
        H_{out(i)} \times H_{st(i)} &= \left(\left[\frac{H_{in(i)} - H_{ker(i)}}{H_{st(i)}} \right] + 1 \right) \times H_{st(i)} \\
        &= \left[\frac{H_{in(i)} - H_{ker(i)} + H_{st(i)}}{H_{st(i)}} \right] \times H_{st(i)} \\
        &\leq \left[\frac{H_{in(i)}}{H_{st(i)}} \right] \times H_{st(i)} \leq H_{in(i)} = H_{out(i-1)}
    \end{align}

When $i > 1$, $H_{in(i)} = H_{out(i-1)}$.
As proven by mathematical induction, inserting an average pooling layer or an activation function between convolutional layers does not change the size. 
\end{proof}

Note that after passing through the $i$-th convolutional layer, the interval in each column of data is $W_{img} \times (H_{[st(i)]} - 1)$ (detailed in ~\ref{Construction of the Convolutional Layer}).
Given $W_{img}$ columns of data or fewer, the image size of $H_{img} \times W_{img}$ is not exceeded by \textbf{Theorem}~\ref{convolution_do_not_increse}.

\begin{corollary}
\label{convolution_row_do_not_increse}
    Denote
    \notag
    \begin{align}
        W_{[st(i)]} = \prod\limits_{j=1}^i W_{st(j)}
    \end{align}
    Then, $W_{img} \geq W_{out(i)} \times W_{[st(i)]}$, where $W_{ker(i)} \geq W_{st(i)} \geq 1$ for all $i \in \left[1, N_{layer}\right]$.
\end{corollary}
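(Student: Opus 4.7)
The plan is to observe that Corollary~\ref{convolution_row_do_not_increse} is the width-direction analogue of Theorem~\ref{convolution_do_not_increse}, and that the convolutional layer acts symmetrically on the two spatial axes. Since $W_{out(i)}$ obeys the same shape formula as $H_{out(i)}$, namely
\begin{equation}
W_{out(i)} = \left\lfloor \frac{W_{in(i)} - W_{ker(i)}}{W_{st(i)}} \right\rfloor + 1,
\end{equation}
with $W_{in(i)} = W_{out(i-1)}$ for $i > 1$ and $W_{in(1)} = W_{img}$, the exact same chain of inequalities used in Theorem~\ref{convolution_do_not_increse} should go through verbatim after swapping $H \leftrightarrow W$.

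Concretely, I would first fix $i$ and compute $W_{out(i)} \times W_{st(i)}$, using the assumption $W_{ker(i)} \geq W_{st(i)}$ to replace $-W_{ker(i)} + W_{st(i)}$ by a non-positive quantity and bound the floor from above by $W_{in(i)}/W_{st(i)}$. This yields the one-step inequality $W_{out(i)} \times W_{st(i)} \leq W_{in(i)}$. Then I would proceed by induction on $i$: the base case $i = 1$ gives $W_{out(1)} \times W_{st(1)} \leq W_{in(1)} = W_{img}$, and for the inductive step I multiply both sides of $W_{out(i)} \times W_{st(i)} \leq W_{out(i-1)}$ by $W_{[st(i-1)]}$ and apply the inductive hypothesis $W_{out(i-1)} \times W_{[st(i-1)]} \leq W_{img}$, obtaining
\begin{equation}
W_{out(i)} \times W_{[st(i)]} = W_{out(i)} \times W_{st(i)} \times W_{[st(i-1)]} \leq W_{out(i-1)} \times W_{[st(i-1)]} \leq W_{img}.
\end{equation}

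I do not expect any genuine obstacle here: the argument is a straightforward re-indexing of the height-direction proof, and the only subtlety is making sure the stride product $W_{[st(i)]}$ telescopes correctly, which it does by the definition $W_{[st(i)]} = W_{st(i)} \cdot W_{[st(i-1)]}$. Intervening average pooling layers or activation functions do not affect the width bound, as already noted in the proof of Theorem~\ref{convolution_do_not_increse}, so the corollary follows immediately.
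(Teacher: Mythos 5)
Your proposal is correct and matches the paper's approach exactly: the paper itself states that the proof of Corollary~\ref{convolution_row_do_not_increse} is analogous to that of Theorem~\ref{convolution_do_not_increse} and omits it, and you have simply carried out that analogous height-to-width substitution together with the induction on $i$.
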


\begin{proof}
The proof is analogous to \textbf{Theorem}~\ref{convolution_do_not_increse} and is therefore omitted.
\end{proof}

We adopt a slot-based operation approach, allowing data with remaining intervals to pass through to the next layer unchanged. The \textit{row interval} is defined as the length between row vectors, and the \textit{column interval} is the length between column vectors. Specifically, if the stride of the $i$-th convolutional layer is $S_i$, the interval between each data value after this layer is:

\notag
\begin{align}
    \text{row interval} &: \prod\limits_{j=1}^{i} S_j - 1 \\ 
    \text{column interval} &: W_{img} \times (\prod\limits_{j=1}^{i} S_j - 1)
\end{align}

By Theorem~\ref{convolution_do_not_increse} and Corollary~\ref{convolution_row_do_not_increse}, each data value stays within the specified index, confirming the flawless operation of the convolutional layer in \sysname.



\subsection{Average Pooling Layer}
Assuming that the kernel size in the average pooling layer is $c$, each channel uses a filter with a kernel size of $(c, c)$ and a stride of $(c, c)$. The result is then multiplied by $1/c^2$. We omit the multiplication of a vector consisting of 0s and 1s, allowing some non-zero but unused values (i.e., \textit{invalid values}) to overlap with other values. According to \textbf{Theorem}~\ref{convolution_do_not_increse}, the number of slots in the ciphertext data remains unchanged when passing through \sysname's average pooling. However, to prevent \textit{invalid values} from occupying the positions of other used data, an additional space of $(W_{img}+1) \times (c-1)$ is required (detailed in \ref{construction_of_the_average_pooling_layer}).

\subsection{Flatten Layer}
Assume that we flatten data from $CH_{in}$ channels, each of size $W_{in} \times H_{in}$. After flattening, the ciphertext data will have $W_{in} \times H_{in} \times CH_{in}$ slots. This size can exceed the maximum size of the layers preceding the flatten layer. We compare these two sizes to determine the largest possible size of the hidden layer. Because the model has plaintext information, comparison operations are possible.

\subsection{Fully Connected Layer}

Let $CH_{in}$ and $CH_{out}$ be the input size and output size of the FC layer, respectively. The size of the slot required in the FC layer is given by:

\notag
\begin{align}
    CH_{out} \times \left \lceil \frac{CH_{in}}{CH_{out}} \right \rceil 
\end{align}

Details of how this formula was derived can be found in Section~\ref{Fully Connected layer converter}. The maximum size up to the preceding layer is compared with this value and updated to the larger of the two.
\section{Experiments}
\label{sec:exmeriments_and_results}

This section presents experimental results demonstrating the feasibility and effectiveness of \sysname. 

We implement seven distinct CNN models, including LeNet-1~\cite{lecun1989handwritten} and LeNet-5~\cite{lecun1998gradient}, to demonstrate that \sysname can be applicable to a wide range of model architectures.

\subsection{Experimental Settings}

We used a server on the NAVER Cloud platform~\cite{navercloudServer} with the following specifications: 16 vCPU cores (Intel(R) Xeon(R) Gold 5220 CPU @ 2.20GHz), 64GB of memory, and a 50GB SSD.

\subsection{Datasets}

We utilized four different datasets for our models: MNIST~\cite{mnist}, CIFAR-10~\cite{cifar10}, USPS~\cite{uspsdataset}, and ECG~\cite{ecgdataset}.

\subsubsection{MNIST Dataset}
The MNIST dataset~\cite{mnist} consists of 70,000 grayscale images of handwritten digits, each sized $28 \times 28$ pixels, with 10 classes representing the digits from 0 to 9. We used 60,000 images for training and 10,000 images for testing.

\subsubsection{CIFAR-10 Dataset}
The CIFAR-10 dataset~\cite{cifar10} includes 60,000 RGB images, each sized $32 \times 32$ pixels, with 10 classes. We used 50,000 images for training and 10,000 images for testing.

\subsubsection{USPS Dataset}
The USPS dataset~\cite{uspsdataset} contains 9,298 grayscale images of handwritten digits, each sized $16 \times 16$ pixels, with 10 classes. We used 7,291 images for training and 2,007 images for testing.

\subsubsection{ECG Dataset}
The ECG dataset from the MIT-BIH arrhythmia database~\cite{ecgdataset} consists of 109,446 samples at a sampling frequency of 125 Hz. It includes signals representing ECG shapes of heartbeats in normal cases and various arrhythmias. Following the preprocessing steps by Abuadbba et al.~\cite{abuadbba2020use}, we used a subset of 26,490 samples, with 13,245 samples for training and 13,245 samples for testing, available at \url{https://github.com/SharifAbuadbba/split-learning-1D}.

\subsection{Hyperparameter Settings for Training}

We trained all models with a learning rate of 0.001 and a batch size of 32. The MNIST and USPS models were trained for 15 epochs, the CIFAR-10 model for 25 epochs, and the ECG model for 30 epochs, based on their convergence behavior.

\subsection{Library and Parameter Setup}

We used SEAL-Python~\cite{huelseseal} for homomorphic operations, ensuring consistency with our baseline models TenSEAL and PyCrCNN, which are also SEAL-based. PyTorch and torchvision were used for model training. Detailed information about libraries and the experimental setup is available at our GitHub repository: \url{https://github.com/hm-choi/uni-henn}. Table~\ref{table:parameter_setting} provides the specific parameters used in our experiments. All security parameters comply with the 128-bit security level as described in~\cite{rahulamathavan2022privacy}.

\begin{center}
\begin{table}[!ht]
\setlength{\tabcolsep}{4.5pt}
\centering
\caption{Parameters used in \sysname. $PK$: public key size (encryption key), $SK$: secret key size (decryption key), $GK$: Galois key size, $RK$: Relinearization keys size. $GK$ and $RK$ together form the evaluation key. \# mult: total allowed multiplications.}
\begin{tabular}{|c|c|}
\noalign{\smallskip}\noalign{\smallskip}\hline
\# slots & 8,192 \\
\hline\hline
scale factor & 32 \\
\hline
log Q & 432 \\
\hline 
$PK$ (MB) & 1.87 \\
\hline
$SK$ (MB) & 0.94 \\
\hline
$GK$ (GB) & 0.57 \\
\hline
$RK$ (MB) & 22.52 \\
\hline
$ctxt$ (MB) & 1.68 \\
\hline
\# mult (\textit{depth}) & 11 \\
\hline
\end{tabular}
\label{table:parameter_setting}
\end{table}
\end{center} 

\section{Experimental Results}
\label{experiment}
In this section, we perform various experiments to verify the feasibility and evaluate the performance of \sysname. Each experiment in this paper was conducted under identical conditions and repeated 30 times to ensure consistency. The objectives of our experiments are as follows:

\begin{itemize}[leftmargin=*]
\item \textbf{Comparison of Inference Time with State-of-the-Art Solutions.} In this section, we compare the inference performance of \sysname against two state-of-the-art HE-based deep learning inference frameworks: TenSEAL~\cite{tenseal2021} and PyCrCNN~\cite{disabato2020privacy}. We selected TenSEAL for comparison because it is a well-known open-source library that uses the \texttt{im2col} algorithm to implement CNNs. This algorithm allows TenSEAL to achieve highly optimized and efficient inference times for convolution operations. However, this efficiency comes at the cost of flexibility: TenSEAL's architecture only supports models with a single convolutional layer, making it unsuitable for CNN architectures with multiple convolutional layers. PyCrCNN, on the other hand, does not employ the \texttt{im2col} algorithm, making it applicable to a broader range of CNN architectures. However, this flexibility results in slower inference times compared to TenSEAL. Our experiments show that \sysname successfully balances both flexibility and efficiency. It achieves inference times comparable to those of TenSEAL while supporting CNN architectures with multiple convolutional layers, making \sysname a more versatile solution for HE-based deep learning inference.

\item \textbf{Adaptability of \sysname for Various CNN Model Architectures.} One of the compelling features of \sysname is its ability to adapt to a variety of CNN architectures, including both complex and 1D CNN models. To demonstrate this adaptability, we conducted experiments with several CNN models, including LeNet-5, a seminal CNN model widely adopted for digit recognition tasks~\cite{zhang2023adding, liu2022convnet}. LeNet-5 is more complex than many other models, with multiple convolutional layers and fully connected layers. Our successful implementation of LeNet-5 in \sysname highlights the system's ability to handle large and complex CNN models effectively. We also considered a 1D CNN architecture particularly useful for sequence-based tasks such as time-series analysis, natural language processing, and certain bioinformatics applications. This capability sets \sysname apart from solutions like TenSEAL, which is constrained to supporting only single convolutional layer models and does not offer support for 1D CNNs. Through these experiments, we aim to demonstrate \sysname's comprehensive applicability and its ability to adapt to various CNN architectures, making it a versatile tool for secure and efficient deep learning inference across diverse application domains.

\item \textbf{Adaptability of \sysname with Various Datasets.} One of the key strengths of \sysname is its adaptability to various kinds of data. To demonstrate this, we conducted experiments using four diverse datasets: MNIST, CIFAR-10, USPS, and ECG. MNIST and CIFAR-10 are widely used image classification datasets that serve as standard benchmarks in the deep learning community. MNIST consists of grayscale images of handwritten digits, while CIFAR-10 comprises coloured images spanning ten different object classes. USPS is another grayscale image dataset used for handwriting recognition. ECG represents electrocardiogram data and is commonly used in healthcare applications for diagnosing various heart conditions. By demonstrating that \sysname performs well across these varied datasets, we aim to show its wide applicability to tasks involving images of different sizes and complexities, as well as specialized domains like healthcare.
\end{itemize}

\subsection{Experiment 1: Comparison of Inference Time between \sysname, \tenseal, and \pycrcnn}
\label{experiment1}

In this experiment, we implement the model, denoted as $M_1$, using \sysname, TenSEAL, and PyCrCNN. The $M_1$ model was introduced in the TenSEAL paper~\cite{tenseal2021} and is well-suited for implementation using TenSEAL. Despite its simple architecture, which contains only one convolutional layer, the model achieves a high accuracy of 97.65\%. Table~\ref{table:parameter_setting_M1} provides detailed specifications for this model architecture. The Conv2D parameters -- $CH_{in}$, $CH_{out}$, $K$, and $S$ -- represent the number of input channels, the number of output channels, kernel size, and stride for a two-dimensional convolutional layer, respectively. Similarly, $DAT_{in}$ and $DAT_{out}$ in the FC layer indicate the input and output dimensions. We use these notations to represent all other remaining models presented in this paper.


\begin{table}[!ht]
\begin{center}
\caption{Detailed parameters for $M_1$.}
\renewcommand{\arraystretch}{1.0}
\begin{tabular}{|c|c|c|}
\hline
Layer & Parameter & \# Mult\\
\hline\hline
Conv2d & $CH_{in}$ = 1, $CH_{out}$ = 8, $K$ = 4, $S$ = 3 & 1\\
\hline
Square & - & 1 \\
\hline
Flatten & - & 2\\
\hline
FC1  & $DAT_{in}$ = 648, $DAT_{out}$ = 64 & 1 \\
\hline
Square  & - & 1 \\
\hline
FC2  & $DAT_{in}$ = 64, $DAT_{out}$ = 10 & 1 \\
\hline
Total \#Mult & - & 7 \\
\hline
\end{tabular}
\label{table:parameter_setting_M1}
\end{center}
\end{table}

Table~\ref{table:inference_experiment1} presents the results. In terms of inference time for a single sample, the TenSEAL implementation exhibited the highest performance, taking 6.298 seconds. TenSEAL achieved this efficiency by optimizing convolution operations on the input data using the \texttt{im2col} algorithm. \sysname took 16.247 seconds, making it 2.6 times slower than TenSEAL. PyCrCNN was the slowest, requiring 154.494 seconds. However, \sysname has the advantage when handling multiple data points simultaneously. It outperforms TenSEAL by supporting batch operations for up to ten input samples, allowing concurrent inference calculations. TenSEAL and PyCrCNN do not offer parallel processing for multiple samples, resulting in total times that increase proportionally with the number of input samples. To process ten input samples together, \sysname still takes 16.247 seconds, approximately 3.9 times and 94.6 times faster than TenSEAL's 63.706 seconds and PyCrCNN's 1537.227 seconds, respectively. 


\begin{table}[!ht]
\begin{center}
\caption{Comparison of the average (with standard deviation) inference time in seconds on the MNIST dataset between PyCrCNN and \sysname for the $M_1$ model architecture.}
\renewcommand{\arraystretch}{1.0}
\resizebox{0.99\linewidth}{!}{
\begin{tabular}{|c|c|c|c|c|c|}
\hline
Layer & \multicolumn{2}{c|}{TenSEAL}&  \multicolumn{2}{c|}{PyCrCNN} & \sysname\\
\hline\hline
\# of samples & 1 & 10 & 1 & 10 & 1 \& 10 \\
\hline
Drop Level & - & - & - & - & \begin{tabular}[c]{@{}c@{}} 0.065 \\ (0.002)\end{tabular} \\
\hline
Conv2d & \begin{tabular}[c]{@{}c@{}} 2.522 \\ (0.041)\end{tabular} &\begin{tabular}[c]{@{}c@{}}25.202   \\(0.121)\end{tabular} & \begin{tabular}[c]{@{}c@{}} 31.844 \\ (0.483)\end{tabular} & \begin{tabular}[c]{@{}c@{}} 316.536 \\ (0.847)\end{tabular} & \begin{tabular}[c]{@{}c@{}} 3.436 \\ (0.036)\end{tabular} \\
\hline
Square & \begin{tabular}[c]{@{}c@{}} 0.024 \\ (0.001)\end{tabular} & \begin{tabular}[c]{@{}c@{}} 0.242 \\ (0.003)\end{tabular} & \begin{tabular}[c]{@{}c@{}} 25.082 \\ (0.548)\end{tabular} & \begin{tabular}[c]{@{}c@{}} 247.291 \\ (0.844)\end{tabular} & \begin{tabular}[c]{@{}c@{}} 0.281 \\ (0.009)\end{tabular} \\
\hline
Flatten & - & - & \begin{tabular}[c]{@{}c@{}} 0.000 \\ (0.000)\end{tabular} & \begin{tabular}[c]{@{}c@{}} 0.000 \\ (0.000)\end{tabular} & \begin{tabular}[c]{@{}c@{}} 5.285 \\ (0.060)\end{tabular} \\
\hline
FC1  & \begin{tabular}[c]{@{}c@{}} 3.540 \\ (0.293)\end{tabular} & \begin{tabular}[c]{@{}c@{}} 36.147 \\ (1.084)\end{tabular} & \begin{tabular}[c]{@{}c@{}} 94.883 \\ (0.287)\end{tabular} & \begin{tabular}[c]{@{}c@{}} 946.841 \\ (1.407)\end{tabular} & \begin{tabular}[c]{@{}c@{}} 6.783 \\ (0.064)\end{tabular} \\
\hline
Square  & \begin{tabular}[c]{@{}c@{}} 0.014 \\ (0.001)\end{tabular} & \begin{tabular}[c]{@{}c@{}} 0.140 \\ (0.003)\end{tabular} & \begin{tabular}[c]{@{}c@{}} 1.527 \\ (0.030)\end{tabular} & \begin{tabular}[c]{@{}c@{}} 15.017 \\ (0.083)\end{tabular} & \begin{tabular}[c]{@{}c@{}} 0.011 \\ (0.000)\end{tabular} \\
\hline
FC2  & \begin{tabular}[c]{@{}c@{}} 0.198 \\ (0.027)\end{tabular} & \begin{tabular}[c]{@{}c@{}} 1.975 \\ (0.077)\end{tabular} & \begin{tabular}[c]{@{}c@{}} 1.158 \\ (0.025)\end{tabular} & \begin{tabular}[c]{@{}c@{}} 11.542 \\ (0.062)\end{tabular} & \begin{tabular}[c]{@{}c@{}} 0.386 \\ (0.009)\end{tabular} \\ \hline
Total  & \begin{tabular}[c]{@{}c@{}} 6.298 \\ (0.291)\end{tabular} & \begin{tabular}[c]{@{}c@{}} 63.706 \\ (1.113)\end{tabular} & \begin{tabular}[c]{@{}c@{}} 154.494 \\ (1.110)\end{tabular} & \begin{tabular}[c]{@{}c@{}} 1537.227 \\ (2.555)\end{tabular} & \begin{tabular}[c]{@{}c@{}} 16.247 \\ (0.103)\end{tabular} \\

\hline
\end{tabular}}
\label{table:inference_experiment1}
\end{center}
\end{table}

We conducted additional experiments to determine the number of samples at which \sysname starts to outperform TenSEAL. The experiment was carried out by incrementally increasing the number of samples and observing the inference time. The results are presented in Figure~\ref{figure:ablation2}.

\begin{figure}[!ht]
\centerline{\includegraphics[clip=true, width=.8\columnwidth]{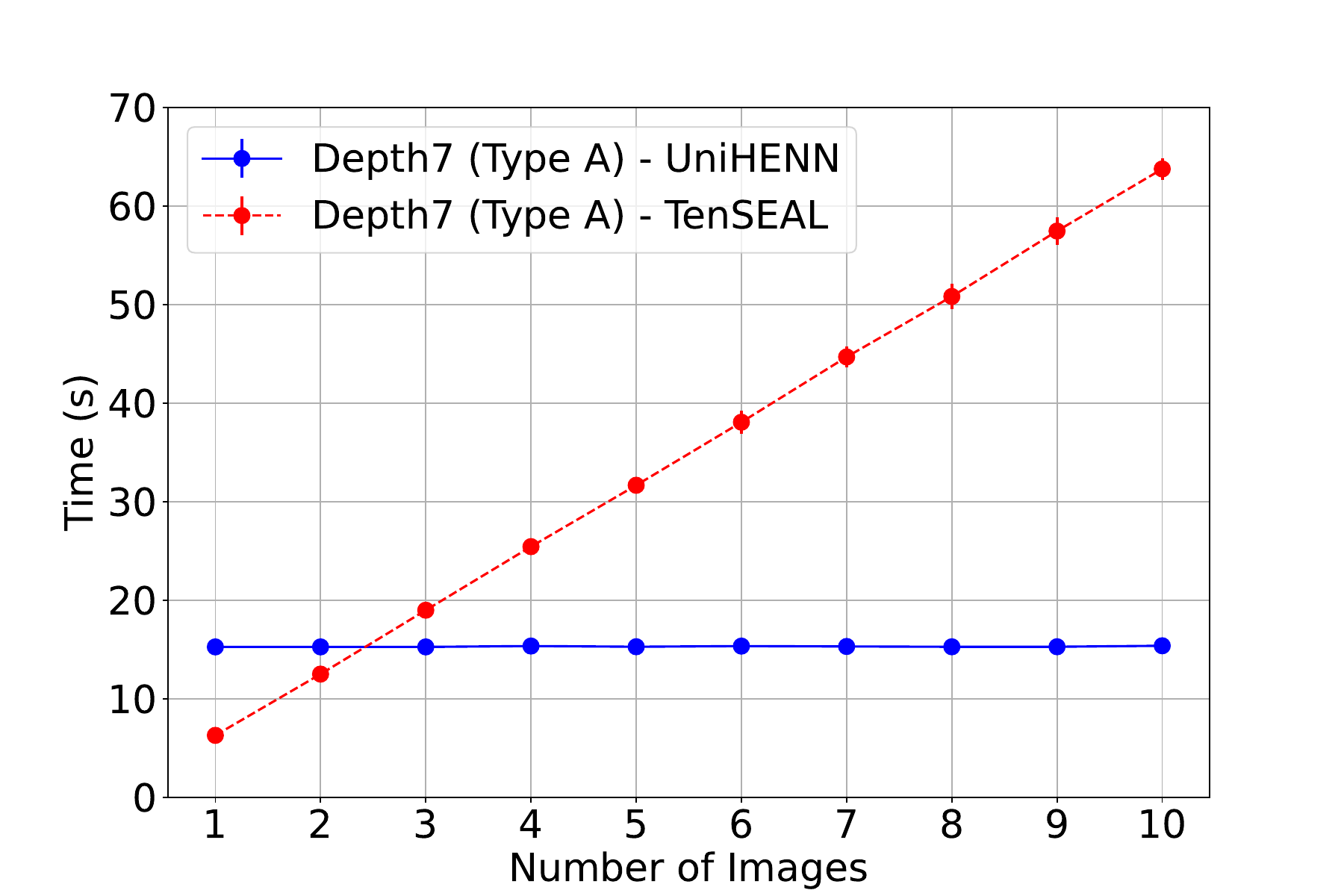}}
\caption{Average inference time (in seconds) for the $M_1$ model architecture using \sysname and TenSEAL with varying the number of input samples.}
\label{figure:ablation2}
\end{figure}

In Figure~\ref{figure:ablation2}, the inference times for \sysname and TenSEAL using the $M_1$ model are presented. Since TenSEAL does not support batched inference, its inference time increases linearly as the number of input images grows. In contrast, \sysname does support batched inference, allowing up to 10 MNIST images to be processed simultaneously in the $M_1$ model. The figure reveals that \sysname's inference time becomes shorter than TenSEAL's starting at a batch size of 3. This demonstrates that \sysname surpasses TenSEAL in efficiency when concurrently processing $k$ images, where $k \geq 3$.

\subsection{Experiment 2: Comparison of Inference Time between \sysname and \pycrcnn for \lenet-1}
\label{experiment2} 

In this experiment, we utilize the LeNet-1 model, denoted as $M_2$. The accuracy of $M_2$ is 98.62\%, slightly higher than $M_1$. The detailed specifications of these model architectures can be found in Table~\ref{table:parameter_setting_M2}. Implementing the hyperbolic tangent function (tanh) in HE is computationally challenging; therefore, we have substituted the activation function from tanh to the square function. We find that the modified LeNet-1 model with the square activation achieves an accuracy of 98.62\% comparable to the 98.41\% accuracy of the original LeNet-1 model when tested on a 10,000 sample MNIST dataset. This indicates that the modification in the activation function does not significantly impact the model's accuracy. 

\begin{table}[!ht]
\begin{center}
\caption{Detailed parameters for $M_2$.}
\renewcommand{\arraystretch}{1.0}
\begin{tabular}{|c|c|c|}
\hline
Layer & Parameter & \#Mult\\
\hline\hline
Conv2d & $CH_{in}$ = 1, $CH_{out}$ = 4, $K$ = 5, $S$ = 1 & 1\\
\hline
Square  & - & 1 \\
\hline
AvgPool2d & kernel size = 2 & 0 \\
\hline
Conv2d & $CH_{in}$ = 4, $CH_{out}$ = 12, $K$ = 5, $S$ = 1 & 1\\
\hline
Square  & - & 1 \\
\hline
AvgPool2d & kernel size = 2 & 0 \\
\hline
Flatten &  - & 2\\
\hline
FC1 & $DAT_{in}$ = 192, $DAT_{out}$ = 10 & 1\\
\hline
Total \#Mult & - & 7 \\
\hline
\end{tabular}
\label{table:parameter_setting_M2}
\end{center}
\end{table}

This experiment aims to confirm that \sysname can support CNN models with more than two convolutional layers. Importantly, the TenSEAL library cannot be used in this experiment due to its constraint of supporting only a single convolutional layer, a limitation arising from its \texttt{im2col} algorithm implementation. The results are presented in Table~\ref{table:inference_experiment2}.

\begin{table}[!ht]
\begin{center}
\caption{Comparison of the average (with standard deviation) inference time in seconds on the MNIST dataset between PyCrCNN and UniHENN for $M_2$ model architecture.}
\renewcommand{\arraystretch}{1.0}
\begin{tabular}{|c|c|c|}
\hline
Model & PyCrCNN & \sysname \\
\hline\hline
Drop Level & - & 0.066 (0.003) \\
\hline
Conv2d & 169.182 (0.891) & 3.715 (0.048) \\
\hline
Square & 88.759 (0.605) & 0.140 (0.003) \\
\hline
AvgPool2d & 1.955 (0.080) & 0.535 (0.012) \\
\hline
Conv2d & 520.452 (1.609) & 21.697 (0.150) \\
\hline
Square & 13.297 (0.116) & 0.260 (0.005) \\
\hline
AvgPool2d & 0.419 (0.019) & 0.875 (0.011) \\
\hline 
Flatten & 0.000 (0.000) & 2.279 (0.020) \\
\hline 
FC1 & 6.527 (0.070) & 0.522 (0.014) \\
\hline 
Total & 800.591 (2.279) & 30.089 (0.155) \\
\hline 
\end{tabular}
\label{table:inference_experiment2}
\end{center}
\end{table}

In Table~\ref{table:inference_experiment2}, \sysname takes 30.089 seconds, making it approximately 26.6 times faster than PyCrCNN, which takes 800.591 seconds. In this experiment, most of the computational time for both \sysname and PyCrCNN is consumed in the second convolutional layer. These findings highlight the importance of optimizing convolutional layer operations for time-efficient CNN inference in the context of HE.

Furthermore, both Experiments 1 and 2 employ the same input ciphertext, showcasing that \sysname enables diverse CNN models without requiring re-encryption, provided the supported HE parameters across the models are identical.


\subsection{Experiment 3: Adaptibility of \sysname for CNN Models with Approximate \relu activation}
\label{experimet3}

In this experiment, we implement a CNN model, denoted as $M_3$, with approximate ReLU for the MNIST dataset. Originally, we planned to modify $M_2$ by replacing the square activation function with approximate ReLU, but the performance was not satisfactory. Therefore, we redesigned the model to achieve better performance.

The model accuracy of $M_3$ is 98.22\%, which is similar to $M_2$'s accuracy of 98.62\%. Table~\ref{table:parameter_setting_M3} provides detailed specifications for this model architecture.

\begin{table}[!ht]
\begin{center}
\caption{Detailed parameters for $M_3$.}
\renewcommand{\arraystretch}{1.0}
\begin{tabular}{|c|c|c|}
\hline
Layer & Parameter & \#Mult\\
\hline\hline
Conv2d & $CH_{in}$ = 1, $CH_{out}$ = 6, $K$ = 3, $S$ = 1 & 1\\
\hline
Approx ReLU  &  $f(x) = 0.375373 + 0.5x + 0.117071x^{2}$ & 2 \\
\hline
AvgPool2d & kernel size = 2 & 0 \\
\hline
Flatten &  - & 2\\
\hline
FC1  & $DAT_{in}$ = 1014, $DAT_{out}$ = 120 & 1 \\
\hline
Approx ReLU  &  $f(x) = 0.375373 + 0.5x + 0.117071x^{2}$ & 2 \\
\hline
FC2  & $DAT_{in}$ = 120, $DAT_{out}$ = 10 & 1 \\
\hline
Total \#Mult & - & 9 \\
\hline
\end{tabular}
\label{table:parameter_setting_M3}
\end{center}
\end{table}

Table~\ref{table:inference_experiment3} shows a layer-by-layer breakdown of the inference time for the $M_3$ model architecture. The Flatten and FC1 layers are the most time-consuming, taking an average of 9.603 seconds and 15.557 seconds, respectively. These two layers alone contribute significantly to the total inference time of 29.105 seconds. While \sysname is significantly slower than non-HE models, it provides enhanced privacy and security, which may be crucial for specific applications or compliance requirements. The inference time of 29.105 seconds is still under 30 seconds, suggesting that \sysname is practical for real-world applications, especially in contexts where data security is paramount. This inference time could be considered acceptable depending on the specific use case and the sensitivity of the data being processed. These results further validate the adaptability and efficiency of \sysname in handling various CNN architectures with customized functionalities like approximate ReLU.

\begin{table}[!ht]
\begin{center}
\caption{Average (with standard deviation) inference time in seconds on the MNIST dataset for the $M_3$ model architecture.}
\renewcommand{\arraystretch}{1.0}
\begin{tabular}{|c|c|}
\hline
Model &  UniHENN \\
\hline\hline
Drop Level & 0.035 (0.001) \\
\hline
Conv2d & 1.863 (0.029) \\
\hline
Approx ReLU & 0.513 (0.011) \\
\hline
AvgPool2d & 1.030 (0.024) \\
\hline
Flatten & 9.603 (0.081) \\
\hline 
FC1 & 15.557 (0.111) \\
\hline 
Approx ReLU & 0.045 (0.002) \\
\hline
FC2 & 0.459 (0.010) \\
\hline 
Total & 29.105 (0.171) \\
\hline 
\end{tabular}
\label{table:inference_experiment3}
\end{center}
\end{table}

\subsection{Experiment 4: Adaptibility of \sysname for \lenet-5}
\label{experiment4}

In this experiment, we implement the LeNet-5 model~\cite{lecun1998gradient}, denoted as $M_4$, with the only modification being the activation function, which is changed from tanh to square. The model accuracy of $M_4$ is 98.91\%. Table~\ref{table:parameter_setting_M4} provides detailed specifications for this model architecture.

\begin{table}[!ht]
\begin{center}
\caption{Detailed parameters for $M_4$.}
\renewcommand{\arraystretch}{1.0}
\begin{tabular}{|c|c|c|}
\hline
Layer & Parameter & \#Mult\\
\hline\hline
Conv2d & $CH_{in}$ = 1, $CH_{out}$ = 6, $K$ = 5, $S$ = 1 & 1\\
\hline
Square  & - & 1 \\
\hline
AvgPool2d & kernel size = 2 & 0 \\
\hline
Conv2d & $CH_{in}$ = 6, $CH_{out}$ = 16, $K$ = 5, $S$ = 1 & 1\\
\hline
Square  & - & 1 \\
\hline
AvgPool2d & kernel size = 2 & 0 \\
\hline
Conv2d & $CH_{in}$ = 16, $CH_{out}$ = 120, $K$ = 5, $S$ = 1 & 1\\
\hline
Square  & - & 1 \\
\hline
Flatten &  - & 0\\
\hline
FC1 & $DAT_{in}$ = 120, $DAT_{out}$ = 84 & 1\\
\hline
Square  & - & 1 \\
\hline
FC2  & $DAT_{in}$ = 84, $DAT_{out}$ = 10 & 1 \\
\hline
Total \#Mult & - & 9 \\
\hline
\end{tabular}
\label{table:parameter_setting_M4}
\end{center}
\end{table}

Table~\ref{table:inference_experiment4} shows a layer-by-layer breakdown of the inference time for the $M_4$ model architecture. The results show that the inference time for $M_4$ is 740.128 seconds. This is substantially slower than the previous experiments, which could be attributed to the model's increased complexity. The slow performance emphasizes the need for further optimization, especially if \sysname is to be broadly applied to more complex CNN architectures like LeNet-5 for real-world applications. Note that the convolutional layers are the most time-consuming, highlighting the critical need for optimizing these operations when implementing CNNs using HE. With such a long inference time, the immediate practicality of using this model for real-time or near-real-time applications is limited. However, this could be acceptable for services that require strong privacy controls and where data security is a higher priority than speed. For example, in healthcare or financial services, where data may be extremely sensitive, this level of privacy may justify the slower inference times.

\begin{table}[!ht]
\begin{center}
\caption{Average (with standard deviation) inference time in seconds on the MNIST dataset for the $M_4$ model architecture.}
\renewcommand{\arraystretch}{1.0}
\begin{tabular}{|c|c|}
\hline
Layer & \sysname \\
\hline\hline
Drop Level & 0.035 (0.002) \\
\hline
Conv2d & 5.244 (0.048) \\ 
\hline
Square  & 0.312 (0.005) \\ 
\hline
AvgPool2d & 0.865 (0.017) \\ 
\hline
Conv2d & 48.222 (0.266) \\ 
\hline
Square  & 0.567 (0.010) \\ 
\hline
AvgPool2d & 1.453 (0.020) \\ 
\hline
Conv2d & 668.688 (1.241) \\ 
\hline
Square  & 2.613 (0.030) \\ 
\hline
Flatten & 3.995 (0.044) \\ 
\hline
FC1 & 7.669 (0.074) \\
\hline
Square  & 0.011 (0.000) \\ 
\hline
FC2 & 0.454 (0.009) \\ 
\hline
Total & 740.128 (1.381) \\ 
\hline
\end{tabular}
\label{table:inference_experiment4}
\end{center}
\end{table}

\subsection{Experiment 5: Adaptability of \sysname for CNN Models on Color Images}
\label{experiment5} 

In this experiment, we evaluate the adaptability of \sysname using the CIFAR-10 color image dataset. We implement a CNN model, denoted as \( M_5 \), which is a modified version of $M_4$, to achieve satisfactory accuracy on CIFAR-10. The model accuracy of $M_5$ is 73.26\%. Table~\ref{table:parameter_setting_M5} provides detailed specifications for this model architecture.

Note that a comparison with TenSEAL is not feasible, as TenSEAL does not process multiple channels, which is essential to process color images. Additionally, we attempted an experiment with PyCrCNN under the same settings but failed to obtain results despite running the experiment for approximately 15 hours. This failure is attributed to PyCrCNN's approach of encrypting each parameter with an individual ciphertext, which demands substantial memory and computational time.

\begin{table}[!ht]
\begin{center}
\caption{Detailed parameters for $M_5$.}
\resizebox{0.99\linewidth}{!}{
\renewcommand{\arraystretch}{1.0}
\begin{tabular}{|c|c|c|}
\hline
Layer & Parameter & \#Mult\\
\hline\hline
Conv2d & $CH_{in}$ = 3, $CH_{out}$ = 16, $K$ = 3, $S$ = 1 & 1\\
\hline
Square  & - & 1 \\
\hline
AvgPool2d & kernel size = 2 & 0 \\
\hline
Conv2d & $CH_{in}$ = 16, $CH_{out}$ = 64, $K$ = 4, $S$ = 1 & 1\\
\hline
Square  & - & 1 \\
\hline
AvgPool2d & kernel size = 2 & 0 \\
\hline
Conv2d & $CH_{in}$ = 64, $CH_{out}$ = 128, $K$ = 3, $S$ = 1 & 1\\
\hline
Square  & - & 1 \\
\hline
AvgPool2d & kernel size = 4 & 0 \\
\hline
Flatten &  - & 1\\
\hline
FC1 & $DAT_{in}$ = 128, $DAT_{out}$ = 10 & 1\\
\hline
Total \#Mult & - & 8 \\
\hline
\end{tabular}}
\label{table:parameter_setting_M5}
\end{center}
\end{table}

Table~\ref{table:inference_experiment5} presents the experimental results, showing that the total inference time for \sysname is approximately 21 minutes. While this may seem long, it is important to note that the convolutional layers are particularly time-consuming, requiring about 256.000 seconds for the first layer and 938.446 seconds for the second. This is consistent with the results of previous experiments. Despite the long inference time, we consider it tolerable given the inherent complexities of performing inference on color images, a capability not offered by alternative solutions.

\begin{table}[!ht]
\begin{center}
\caption{Average (with standard deviation) inference time in seconds for the $M_5$ model architecture.}
\renewcommand{\arraystretch}{1.0}
\begin{tabular}{|c|c|}
\hline
Layer & \sysname \\
\hline\hline
Drop Level & 0.155 (0.004) \\
\hline
Conv2d & 8.935 (0.062) \\ 
\hline
Square  & 0.703 (0.011) \\ 
\hline
AvgPool2d & 0.429 (0.004) \\ 
\hline
Conv2d & 256.000 (0.867) \\ 
\hline
Square  & 1.839 (0.023) \\ 
\hline
AvgPool2d & 1.891 (0.022) \\ 
\hline
Conv2d & 938.446 (1.884) \\ 
\hline
Square  & 2.064 (0.019) \\ 
\hline
AvgPool2d  & 4.466 (0.049) \\ 
\hline
Flatten & 33.370 (0.162) \\ 
\hline
FC1 & 3.146 (0.025) \\
\hline
Total & 1251.444 (2.532) \\ 
\hline
\end{tabular}
\label{table:inference_experiment5}
\end{center}
\end{table}

\subsection{Experiment 6: Evaluation of Inference Time of \sysname on Grayscale Images}
\label{experiment6}

To assess the adaptability of \sysname to diverse datasets, we performed experiments using the USPS dataset and the $M_6$ model architecture. For performance comparison, we also implemented the same model using PyCrCNN. The model accuracy of $M_6$ is 94.21\%. Table~\ref{table:parameter_setting_M6} provides detailed specifications for this model architecture. 

\begin{table}[!ht]
\begin{center}
\caption{Detailed parameters for $M_6$.}
\renewcommand{\arraystretch}{1.0}
\begin{tabular}{|c|c|c|}
\hline
Layer & Parameter & \#Mult\\
\hline\hline
Conv2d & $CH_{in}$ = 1, $CH_{out}$ = 6, $K$ = 4, $S$ = 2 & 1\\
\hline
Square  & - & 1 \\
\hline
Flatten &  - & 2\\
\hline
FC1  & $DAT_{in}$ = 294, $DAT_{out}$ = 64 & 1 \\
\hline
Square  & - & 1 \\
\hline
FC2  & $DAT_{in}$ = 64, $DAT_{out}$ = 10 & 1 \\
\hline
Total \#Mult & - & 7 \\
\hline
\end{tabular}
\label{table:parameter_setting_M6}
\end{center}
\end{table}

Table~\ref{table:inference_experiment6} shows that \sysname is approximately 6 times faster than PyCrCNN, with a total inference time of 12.309 seconds compared to PyCrCNN's 71.139 seconds. Interestingly, the FC1 layer in PyCrCNN consumes a significant portion of the time (42.792 seconds), likely due to its large input size of 294. In contrast, although \sysname also incurs a relatively high computational cost at the FC1 layer, it is substantially less than that of PyCrCNN. This suggests that \sysname can efficiently handle FC layers with large input sizes.

\begin{table}[!ht]
\begin{center}
\caption{Comparison of the average (with standard deviation) inference time in seconds on the USPS dataset between PyCrCNN and \sysname for the $M_6$ model architecture.}
\renewcommand{\arraystretch}{1.0}
\begin{tabular}{|c|c|c|}
\hline
Layer & PyCrCNN & \sysname\\
\hline\hline
Drop Level & - & 0.066 (0.002) \\
\hline
Conv2d & 14.441 (0.125) & 2.662 (0.029)\\ 
\hline
Square  & 11.238 (0.090) & 0.212 (0.004)\\ 
\hline
Flatten & 0.000 (0.000) & 3.026 (0.032)\\
\hline
FC1 & 42.792 (0.121) & 5.948 (0.074)\\
\hline
Square & 1.502 (0.026) & 0.011 (0.000)\\ 
\hline
FC2 & 1.166 (0.015) & 0.384 (0.009)\\ 
\hline
Total & 71.139 (0.198) & 12.309 (0.092)\\ 
\hline
\end{tabular}
\label{table:inference_experiment6}
\end{center}
\end{table}

\subsection{Experiment 7: Evaluation of Inference Time of \sysname for a 1D CNN Model}
\label{experiment7} 

To evaluate the adaptability of \sysname for 1D CNN models, we implemented a 1D CNN model, denoted as $M_7$, for processing ECG data. This model, a modified version of the 1D CNN model by Abuadbba et al.~\cite{abuadbba2020use}, achieves an accuracy of 96.87\%, which is comparable to the 98.90\% achieved by Abuadbba et al.'s original model. For performance comparison, the same model was also implemented using PyCrCNN. Table~\ref{table:parameter_setting_M7} provides detailed specifications for this model architecture. 


\begin{table}[!ht]
\begin{center}
\caption{Detailed parameters for $M_7$.}
\renewcommand{\arraystretch}{1.0}
\begin{tabular}{|c|c|c|}
\hline
Layer & Parameter & \#Mult\\
\hline\hline
Conv1d & $CH_{in}$ = 1, $CH_{out}$ = 2, $K$ = 2, $S$ = 2 & 1\\
\hline
Square & - & 1\\
\hline
Conv1d & $CH_{in}$ = 2, $CH_{out}$ = 4, $K$ = 2, $S$ = 2 & 1\\
\hline
Flatten & - & 1\\
\hline
FC1  & $DAT_{in}$ = 128, $DAT_{out}$ = 32 & 1 \\
\hline
Square  & - &  1\\
\hline
FC2  & $DAT_{in}$ = 32, $DAT_{out}$ = 5 & 1 \\
\hline
Total \#Mult & - & 7\\
\hline
\end{tabular}
\label{table:parameter_setting_M7}
\end{center}
\end{table}

Table~\ref{table:inference_experiment7} shows that \sysname is approximately 3.0 times faster than PyCrCNN, recording a total inference time of 5.119 seconds compared to PyCrCNN's 15.514 seconds. This indicates that \sysname is also more efficient even for relatively smaller models, such as 1D CNNs, where the computational time for convolutional layers is not as extensive. This efficiency positions \sysname as an ideal choice for privacy-sensitive applications like disease diagnosis systems, aligning well with the principles of HE.

\begin{table}[!ht]
\begin{center}
\caption{Average (with standard deviation) inference time in seconds for the $M_7$ model architecture on the ECG dataset.}
\renewcommand{\arraystretch}{1.0}
\begin{tabular}{|c|c|c|}
\hline
Layer & PyCrCNN & \sysname\\
\hline\hline
Drop Level & - & 0.069 (0.007) \\
\hline
Conv1d & 0.556 (0.011) & 0.113 (0.003)\\ 
\hline
Square  & 4.913 (0.043) & 0.075 (0.002)\\ 
\hline
Conv1d  & 1.081 (0.041) & 0.283 (0.006)\\ 
\hline
Flatten & 0.000 (0.000) & 1.622 (0.015)\\ 
\hline
FC1  & 8.163 (0.067) & 2.765 (0.022)\\ 
\hline
Square & 0.559 (0.011) & 0.012 (0.000)\\
\hline
FC2 & 0.242 (0.006) & 0.180 (0.016)\\ 
\hline
Total & 15.514 (0.114) & 5.119 (0.039)\\ 
\hline
\end{tabular}
\label{table:inference_experiment7}
\end{center}
\end{table}

\subsection{Validation of inference results on encrypted data}

The CKKS scheme operates on approximate complex arithmetic, which can introduce minor errors after homomorphic operations. Therefore, it is crucial to validate \sysname's inference results by comparing them with the outcomes of the original plaintext inference.

We selected $M_4$ and $M_5$ as representative models for validation, as the errors were very small in the case of other models. We evaluated the $M_4$ model with 2,000 samples from the MNIST dataset and the $M_5$ model with 2,000 samples from the CIFAR-10 dataset. The results showed that all outputs produced by \sysname were equivalent to those of the original models without any significant loss in accuracy. This indicates that \sysname can perform secure and highly accurate inferences under the parameter configurations presented in Table~\ref{table:parameter_setting}.


\subsection{Impact of HE Parameters}
\label{Effects of the Parameters for HE}

We conducted an additional analysis to investigate how the inference time and encrypted inference error vary depending on the parameters of the CKKS scheme.

The CKKS scheme's operation time and decrypted result accuracy are influenced by several parameters, specifically \textit{\# slots}, \textit{scale}, and \textit{depth}. Therefore, the inference time for CNN models is intricately linked to these parameters. A detailed analysis of these parameter settings is essential to optimize inference time while maintaining result accuracy. The \textit{\# slots} parameter is determined by the degree $N$ of the polynomial ring, where $N$ is half the ring's value. The \textit{scale} parameter represents the precision of floating-point digits, while \textit{depth} is influenced by both \textit{scale} and the polynomial $P$.

\subsubsection{Influence of Depth on Inference Time}
\label{experiment9}

The operation time increases as the \textit{depth} increases while using a fixed \textit{scale}. Figure~\ref{figure:ablation3} shows that, in the $M_1$ model, the inference time increases linearly with \textit{depth}. This is attributed to the ciphertext size being dependent on \textit{depth}, thereby increasing the computational workload. To achieve optimal performance, the \textit{depth} should align with the number of required multiplications for the model. However, in our architecture, \textit{depth} is predetermined before model selection. To address this, we set the \textit{depth} as high as possible and then optimize by fine-tuning the input ciphertext.

\begin{figure}[!ht]
\centerline{\includegraphics[clip=true, width=.8\columnwidth]{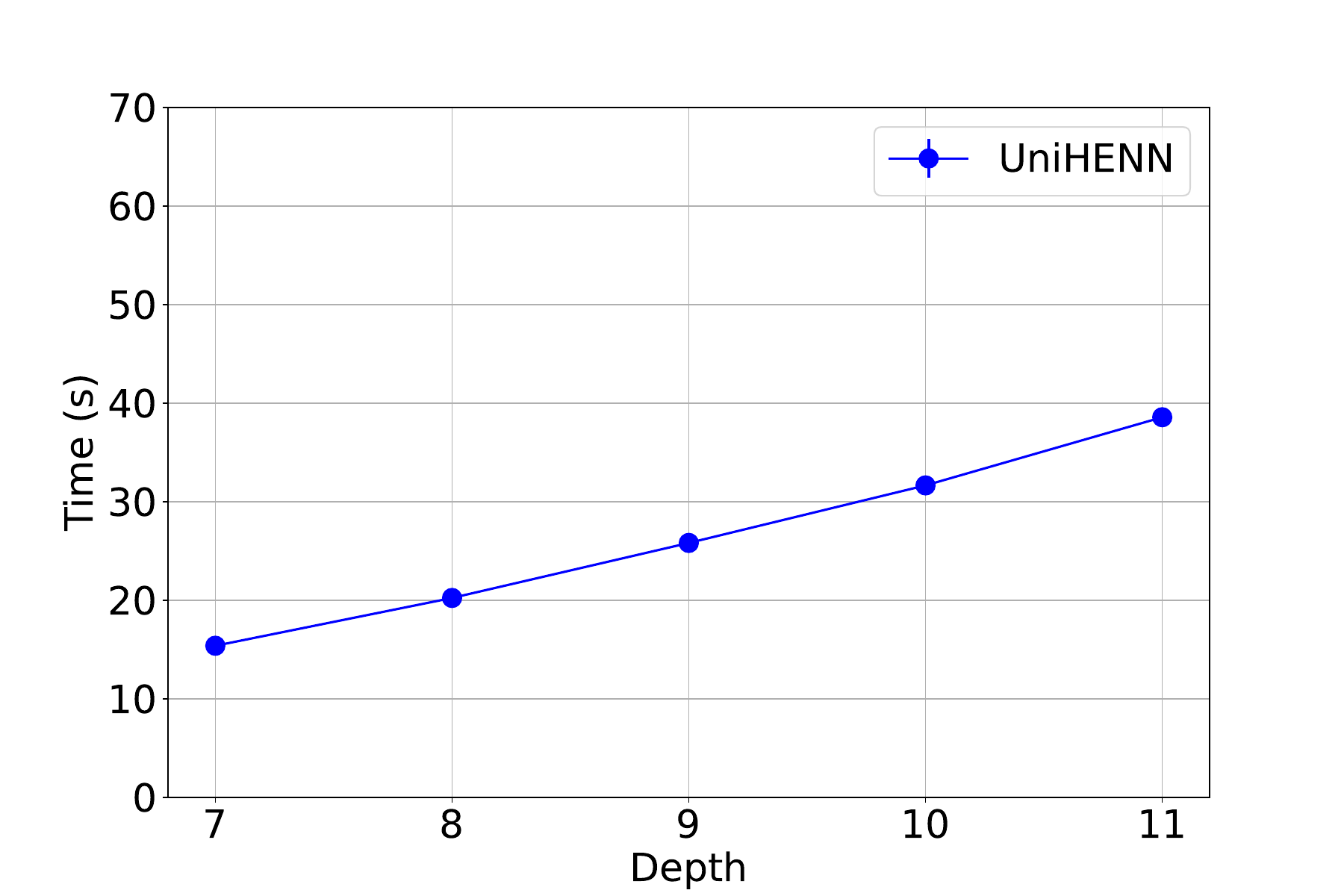}}
\caption{Average inference time for $M_1$ with \textit{depth}.}
\label{figure:ablation3}
\end{figure}


\subsubsection{Impact of Scale on Error}
\label{Impact of Scale on Error}

We measured the error in the \(M_1\) model's results as a function of \textit{scale}. Detailed findings are presented in Figure~\ref{figure:ablation4}. Increasing \textit{scale} logarithmically reduces the error. Although a higher \textit{scale} is advantageous, it consequently leads to a lower \textit{depth}, given their inverse relationship due to the fixed \textit{log Q} parameter in the CKKS scheme. Therefore, selecting an optimal \textit{scale} is crucial for minimizing error while ensuring sufficient \textit{depth} for CNN inference. A \textit{scale} of 32 ensures 32-bit decimal point precision. From our observations, we note that the error converges toward zero as the \textit{scale} increases. Specifically, using a \textit{scale} value of 30 or higher can significantly reduce the error.

\begin{figure}[!ht]
\centerline{\includegraphics[clip=true, width=.8\columnwidth]{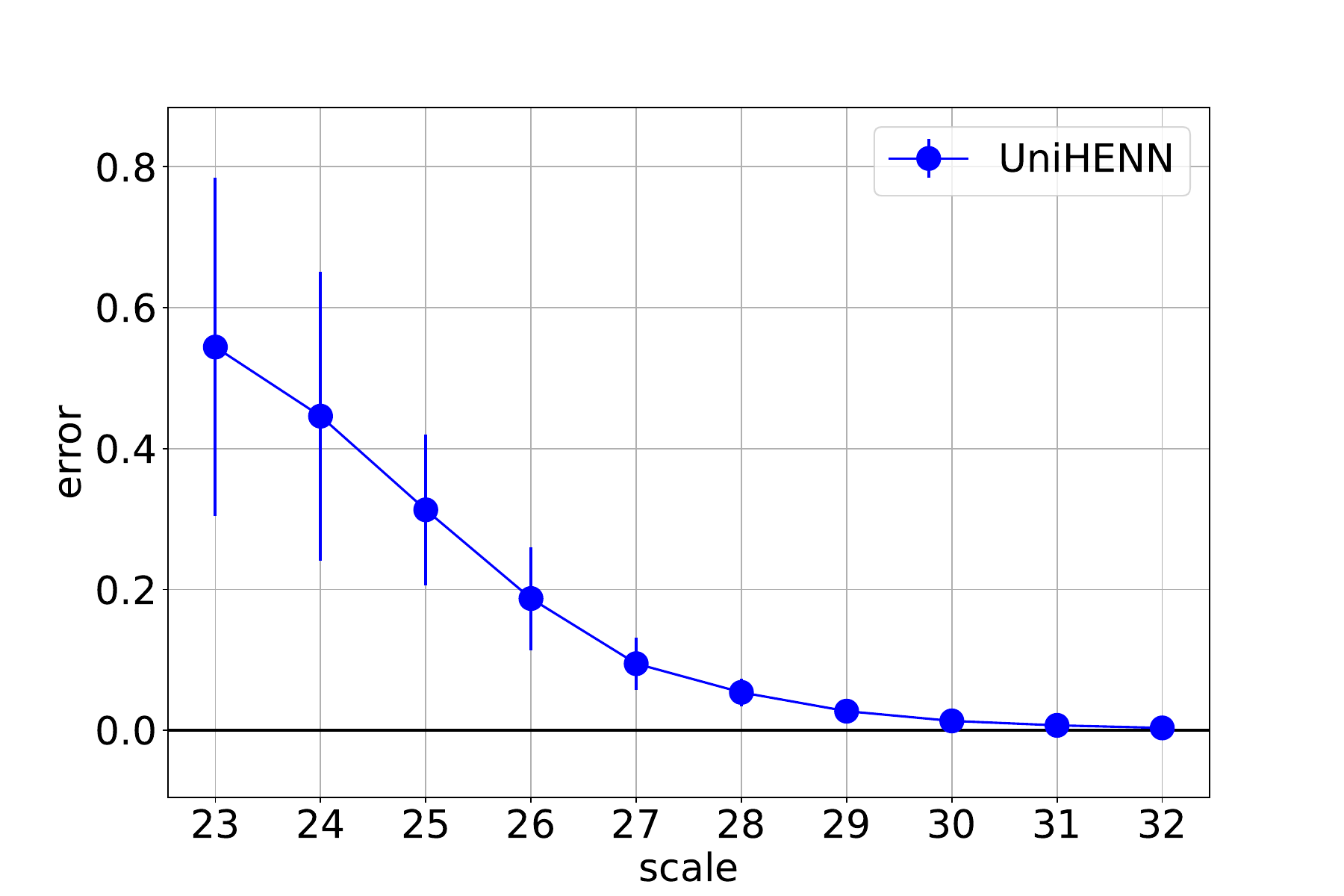}}
\caption{Error in $M_1$ with \textit{scale}.}
\label{figure:ablation4}
\end{figure}

\subsection{Comparison of \sysname and DP-based CNN}
\label{sec:comparison_unihenn_opacus}

This section compares our \sysname framework with differential privacy (DP) for privacy-preserving CNNs, providing insights into the trade-offs between HE and DP-based approaches. We implemented DP-SGD~\cite{abadi2016deep} using Opacus~\cite{opacus}.

We conducted experiments on the MNIST dataset with model $M_{4}$, comprising 60,000 training and 10,000 test images. The Adam optimizer was used with a learning rate of 0.001, training for 15 epochs with a batch size of 32. For the DP-SGD implementation, we set the maximum per gradient norm to 1.0 and used $\delta=0.00001$, satisfying the condition $\delta < 1/60,000$ for the MNIST training set size. We varied $\epsilon$ from 5 to 0.1 during the 15-epoch training. While there is no definitive guideline for choosing $\epsilon$, NIST's post~\cite{nistdppost} suggests that $\epsilon$ between 0 and 5 provides strong privacy protection. Thus, our selected $\epsilon$ values with $\delta=0.00001$ would represent suitable candidates for DP-SGD-based training.

Tables \ref{table:accuracy_between_dp_and_unihenn2} and \ref{table:memory_between_dp_and_unihenn2} present the accuracy and memory consumption results, respectively.

\begin{table}[h]
\centering
\caption{Accuracy comparison between \sysname (U) and DP-based CNN (D) for various privacy budgets ($\epsilon$)  (30 runs, standard deviations in parentheses).}
\resizebox{0.99\linewidth}{!}{
\renewcommand{\arraystretch}{1.0}
\begin{tabular}{|c|c|c|c|c|c|}
\hline
Framework & U & D ($\epsilon$=5) & D ($\epsilon$=1) & D ($\epsilon$=0.5) & D ($\epsilon$=0.1)\\
\hline
Accuracy (\%) & 98.91 & 84.19 (2.59) & 84.57 (2.29) & 83.73 (1.98) &80.63 (1.54) \\
\hline
\end{tabular}}
\label{table:accuracy_between_dp_and_unihenn2}
\end{table}

\begin{table}[h]
\centering
\caption{Memory consumption comparison during inference (30 runs, standard deviations in parentheses).}
\renewcommand{\arraystretch}{1.0}
\begin{tabular}{|c|c|c|}
\hline
Framework & U & D \\
\hline
Memory (MiB) & 3677.64 (1.05) & 3.91 (0.10) \\
\hline
\end{tabular}
\label{table:memory_between_dp_and_unihenn2}
\end{table}

Our results reveal significant trade-offs between HE-based and DP-based privacy-preserving techniques in CNNs. \sysname achieves superior accuracy (98.91\%) compared to the DP-based CNN, which ranges from 84.57\% to 80.63\% as $\epsilon$ decreases. This demonstrates \sysname's ability to maintain model performance across varying privacy parameters. However, this high accuracy comes at a substantial computational cost: \sysname consumes about 941 times more memory (3677.64 MiB) than the DP-based approach (3.91 MiB) due to the intensive nature of HE operations. 

These findings highlight the critical resource trade-offs in privacy-preserving machine learning, emphasizing the need to balance accuracy and computational efficiency when selecting privacy-preserving techniques for specific applications. The choice between HE and DP approaches will depend on the particular use case, with \sysname offering superior accuracy at the cost of higher computational resources, while DP-based methods provide a more memory-efficient solution with a trade-off in accuracy.

\section{Limitations}
\label{limitations}

We propose an optimized CNN model inference mechanism based on HE, \sysname. While \sysname uses batching to reduce inference time, it has three key limitations.

Firstly, complex deep learning models often require a large number of operations, particularly multiplications. In HE, multiplication operations must be limited based on parameter values, or bootstrapping must be used. Due to the extensive multiplication operations in deep learning models, bootstrapping is necessary, significantly slowing computation speed.

Secondly, \sysname does not currently support multi-core architectures or GPUs. Our deployment uses SEAL-Python, an open-source HE framework suited for CPU and single-thread architectures. Most open-source HE frameworks, including SEAL-Python, are designed this way. Recent studies have optimized HE operations for GPUs~\cite{al2020multi, ozcan2023homomorphic}, focusing on parallelism to enhance performance. To address these limitations, we plan to extend our architecture to support multi-core and GPU environments. Our future work will involve optimizing key HE operations for parallel execution on GPUs, implementing efficient memory management techniques, and ensuring proper synchronization mechanisms to maximize computational performance.

Finally, \sysname is designed to reduce computational time through batching, making it efficient for analyzing sensitive data at scale or when inferring multiple data points simultaneously. However, due to its focus on batch operations, \sysname may be inefficient for inferring single data points. To overcome this, we plan to improve computational efficiency by distributing operations within the convolutional and FC layers in the space where batch operations are performed, thereby increasing overall efficiency.
\section{Related Work}
\label{sec:related work}

\subsection{Libraries of HE}
Several prominent HE libraries are available, including SEAL~\cite{sealcrypto}, Lattigo~\cite{lattigo}, HElib~\cite{helib}, and OpenFHE~\cite{openFHE}. Each offers unique features: SEAL~\cite{sealcrypto} supports BFV and CKKS schemes with optimizations like ciphertext packing. We used SEAL-Python, a Python port of the C++ implementation, for our work. Lattigo~\cite{lattigo} implements Ring-LWE-based HE primitives and Multiparty-HE protocols in Go, enabling cross-platform builds and optimized secure computation. HElib~\cite{helib} supports BGV and CKKS schemes with bootstrapping, focusing on Smart-Vercauteren ciphertext packing and Gentry-Halevi-Smart optimization. It also provides an HE assembly language for low-level control. OpenFHE~\cite{openFHE} offers bootstrapping and hardware acceleration using a standard Hardware Abstraction Layer (HAL).

\subsection{Machine learning implementation with HE}
To provide privacy-preserving ML services, many previous studies implement HE-friendly neural networks. They focus on two challenges: ``How to build HE-friendly model architecture, especially, the activation layer'' and ``How to build an efficient HE-based CNN (or DNN) system.''

In the first case, there were efforts to replace activation functions with polynomial functions. CryptoNets~\cite{gilad2016cryptonets} use a square function to replace the ReLU activation function. Chou et al.~\cite{chou2018faster} proposed some techniques to replace activation functions such as ReLU, Softplus, and Swish with low-degree polynomials and use pruning and quantization for the efficiency of homomorphic operations. Ishiyama et al.~\cite{HE_Accurate_CNN} adopt Google's Swish activation function with two- and four-degree polynomials with batch normalization to reduce the errors between Swish and approximated polynomials. However, replacing activation functions with low-degree polynomials reduces model accuracy and is only applicable to shallow models. To overcome these limitations, Park et al.~\cite{park2022aespa} proposed the HerPN block, which can replace the batch normalization and ReLU block by utilizing the Hermite polynomial. Recently, Lee et al.~\cite{lee2023optimizations} proposed PP-DNN, a low-latency model optimization solution focusing on convolution and approximate ReLU operations without bootstrapping, supporting CNN models like ResNet-34 with HE. In contrast, \sysname offers comprehensive optimizations across all layer types, including convolutional, pooling, fully connected, and flatten layers, resulting in more efficient operations overall. Additionally, \sysname introduces a batch system for multi-input inference and optimizes element sizes for batch operations, which are not addressed in PP-DNN.

In the second case, many studies~\cite{juvekar2018gazelle, park2022aespa, ghodsi2020cryptonas} interact with clients to compute non-linear operations. The server sends the encrypted intermediate results to the client, and the client computes the non-linear functions with decrypted intermediate results. In this way, the model accuracy can be preserved, but the communication overhead is increased.
\section{Conclusion}
\label{sec:conclusion}

This paper presents \sysname, a privacy-preserving machine learning framework utilizing HE without the \texttt{im2col} operation, enhancing compatibility with diverse ML models. We evaluated \sysname on four public datasets using six 2D CNNs and one 1D CNN, demonstrating accuracy comparable to unencrypted models. \sysname's batch processing technique significantly improves efficiency, outperforming the state-of-the-art TenSEAL library by 3.9 times when processing 10 MNIST images using a simple CNN model. This framework opens up new possibilities for privacy-preserving machine learning in real-world applications. \EOD


\bibliographystyle{plain}
\bibliography{Ref.bib}

\begin{thebibliography}{10}

\bibitem{helib}
Helib.
\newblock \url{https://github.com/homenc/HElib}, 2022.

\bibitem{huelseseal}
{SEAL-Python}.
\newblock \url{https://github.com/Huelse/SEAL-Python}, 2023.

\bibitem{abadi2016deep}
Martin Abadi, Andy Chu, Ian Goodfellow, H~Brendan McMahan, Ilya Mironov, Kunal Talwar, and Li~Zhang.
\newblock Deep learning with differential privacy.
\newblock In {\em Proceedings of the 2016 ACM SIGSAC conference on computer and communications security}, pages 308--318, 2016.

\bibitem{abuadbba2020use}
Sharif Abuadbba, Kyuyeon Kim, Minki Kim, Chandra Thapa, Seyit~A. Camtepe, Yansong Gao, Hyoungshick Kim, and Surya Nepal.
\newblock Can we use split learning on 1d cnn models for privacy preserving training?
\newblock In {\em Proceedings of the Asia Conference on Computer and Communications Security (ASIACCS)}, 2020.

\bibitem{openFHE}
Ahmad Al~Badawi, Jack Bates, Flavio Bergamaschi, David~Bruce Cousins, Saroja Erabelli, Nicholas Genise, Shai Halevi, Hamish Hunt, Andrey Kim, Yongwoo Lee, et~al.
\newblock Openfhe: Open-source fully homomorphic encryption library.
\newblock In {\em Proceedings of the Workshop on Encrypted Computing and Applied Homomorphic Cryptography (WAHC)}, pages 53--63, 2022.

\bibitem{al2020multi}
Ahmad Al~Badawi, Bharadwaj Veeravalli, Jie Lin, Nan Xiao, Matsumura Kazuaki, and Aung Khin~Mi Mi.
\newblock Multi-gpu design and performance evaluation of homomorphic encryption on gpu clusters.
\newblock {\em IEEE Transactions on Parallel and Distributed Systems}, 32(2):379--391, 2020.

\bibitem{tenseal2021}
Ayoub Benaissa, Bilal Retiat, Bogdan Cebere, and Alaa~Eddine Belfedhal.
\newblock Tenseal: A library for encrypted tensor operations using homomorphic encryption, 2021.

\bibitem{brakerski2014lattice}
Zvika Brakerski and Vinod Vaikuntanathan.
\newblock Lattice-based fhe as secure as pke.
\newblock In {\em Proceedings of the Conference on Innovations in Theoretical Computer Science (ITCS)}, page 1–12. Association for Computing Machinery, 2014.

\bibitem{cheon2017homomorphic}
Jung-Hee Cheon, Andrey Kim, Miran Kim, and Yongsoo Song.
\newblock Homomorphic encryption for arithmetic of approximate numbers, 2017.

\bibitem{chillotti2016faster}
Ilaria Chillotti, Nicolas Gama, Mariya Georgieva, and Malika Izabachene.
\newblock Faster fully homomorphic encryption: Bootstrapping in less than 0.1 seconds.
\newblock In {\em Proceedings of the International Conference on the Theory and Applications of Cryptology and Information Security (ASIACRYPT)}, pages 3--33. Springer, 2016.

\bibitem{blindmatch}
Hyunmin Choi, Simon~S Woo, and Hyoungshick Kim.
\newblock Blind-match: Efficient homomorphic encryption-based 1:n matching for privacy-preserving biometric identification.
\newblock In {\em Proceedings of the ACM International Conference on Information and Knowledge Management}, 2024.

\bibitem{choi2024blind}
Hyunmin Choi, Simon~S Woo, and Hyoungshick Kim.
\newblock Blind-touch: Homomorphic encryption-based distributed neural network inference for privacy-preserving fingerprint authentication.
\newblock In {\em Proceedings of the AAAI Conference on Artificial Intelligence}, volume~38, pages 21976--21985, 2024.

\bibitem{chou2018faster}
Edward Chou, Josh Beal, Daniel Levy, Serena Yeung, Albert Haque, and Li~Fei-Fei.
\newblock Faster cryptonets: Leveraging sparsity for real-world encrypted inference.
\newblock {\em arXiv preprint arXiv:1811.09953}, 2018.

\bibitem{Dathathri}
Roshan Dathathri, Olli Saarikivi, Hao Chen, Kim Laine, Kristin Lauter, Saeed Maleki, Madanlal Musuvathi, and Todd Mytkowicz.
\newblock Chet: an optimizing compiler for fully-homomorphic neural-network inferencing.
\newblock In {\em Proceedings of the ACM SIGPLAN conference on programming language design and implementation (PLDI)}, pages 142--156, 2019.

\bibitem{imagenet}
Jia Deng, Wei Dong, Richard Socher, Li-Jia Li, Kai Li, and Li~Fei-Fei.
\newblock Imagenet: A large-scale hierarchical image database.
\newblock In {\em Proceedings of the Conference on Computer Vision and Pattern Recognition (CVPR)}, pages 248--255, 2009.

\bibitem{mnist}
Li~Deng.
\newblock The mnist database of handwritten digit images for machine learning research.
\newblock {\em IEEE Signal Processing Magazine}, pages 141--142, 2012.

\bibitem{disabato2020privacy}
Simone Disabato, Alessandro Falcetta, Alessio Mongelluzzo, and Manuel Roveri.
\newblock A privacy-preserving distributed architecture for deep-learning-as-a-service.
\newblock In {\em Proceedings of the International Joint Conference on Neural Networks (IJCNN)}, pages 1--8. IEEE, 2020.

\bibitem{ghodsi2020cryptonas}
Zahra Ghodsi, Akshaj~Kumar Veldanda, Brandon Reagen, and Siddharth Garg.
\newblock Cryptonas: Private inference on a relu budget.
\newblock {\em Advances in Neural Information Processing Systems}, 33:16961--16971, 2020.

\bibitem{gilad2016cryptonets}
Ran Gilad-Bachrach, Nathan Dowlin, Kim Laine, Kristin Lauter, Michael Naehrig, and John Wernsing.
\newblock Cryptonets: Applying neural networks to encrypted data with high throughput and accuracy.
\newblock In {\em International conference on machine learning}, pages 201--210, 2016.

\bibitem{halevi2014algorithms}
Shai Halevi and Victor Shoup.
\newblock Algorithms in helib.
\newblock In {\em Proceedings of the International Cryptology Conference (CRYPTO)}, pages 554--571. Springer, 2014.

\bibitem{uspsdataset}
Jonathan~J. {Hull}.
\newblock A database for handwritten text recognition research.
\newblock {\em IEEE Transactions on Pattern Analysis and Machine Intelligence}, pages 550--554, 1994.

\bibitem{HE_Accurate_CNN}
Takumi Ishiyama, Takuya Suzuki, and Hayato Yamana.
\newblock Highly accurate cnn inference using approximate activation functions over homomorphic encryption.
\newblock In {\em Proceedings of the International Conference on Big Data (Big Data)}, pages 3989--3995, 2020.

\bibitem{juvekar2018gazelle}
Chiraag Juvekar, Vinod Vaikuntanathan, and Anantha Chandrakasan.
\newblock Gazelle: A low latency framework for secure neural network inference.
\newblock In {\em 27th USENIX Security Symposium (USENIX Security 18)}, pages 1651--1669, 2018.

\bibitem{khalid2023privacy}
Nazish Khalid, Adnan Qayyum, Muhammad Bilal, Ala Al-Fuqaha, and Junaid Qadir.
\newblock Privacy-preserving artificial intelligence in healthcare: Techniques and applications.
\newblock {\em Computers in Biology and Medicine}, page 106848, 2023.

\bibitem{kim2018logistic}
Andrey Kim, Yongsoo Song, Miran Kim, Keewoo Lee, and Jung~Hee Cheon.
\newblock Logistic regression model training based on the approximate homomorphic encryption.
\newblock {\em BMC medical genomics}, pages 23--31, 2018.

\bibitem{kim2018poster}
Taeyun Kim and Hyoungshick Kim.
\newblock Poster: Can we use biometric authentication on cloud?: Fingerprint authentication using homomorphic encryption.
\newblock In {\em Proceedings of the Asia Conference on Computer and Communications Security (ASIACCS)}, pages 813--815, 2018.

\bibitem{kim2020efficient}
Taeyun Kim, Yongwoo Oh, and Hyoungshick Kim.
\newblock Efficient privacy-preserving fingerprint-based authentication system using fully homomorphic encryption.
\newblock {\em Security and Communication Networks}, 2020.

\bibitem{7318926}
Serkan Kiranyaz, Turker Ince, Ridha Hamila, and Moncef Gabbouj.
\newblock Convolutional neural networks for patient-specific ecg classification.
\newblock In {\em Proceedings of the International Conference of the IEEE Engineering in Medicine and Biology Society (EMBC)}, pages 2608--2611, 2015.

\bibitem{cifar10}
Alex Krizhevsky, Vinod Nair, and Geoffrey Hinton.
\newblock Cifar-10 (canadian institute for advanced research).

\bibitem{lecun1989handwritten}
Yann LeCun, Bernhard Boser, John Denker, Donnie Henderson, Richard Howard, Wayne Hubbard, and Lawrence Jackel.
\newblock Handwritten digit recognition with a back-propagation network.
\newblock {\em Advances in neural information processing systems}, 1989.

\bibitem{lecun1989backpropagation}
Yann LeCun, Bernhard Boser, John~S. Denker, Donnie Henderson, Richard~E. Howard, Wayne Hubbard, and Lawrence~D. Jackel.
\newblock Backpropagation applied to handwritten zip code recognition.
\newblock {\em Neural computation}, pages 541--551, 1989.

\bibitem{lecun1998gradient}
Yann LeCun, L{\'e}on Bottou, Yoshua Bengio, and Patrick Haffner.
\newblock Gradient-based learning applied to document recognition.
\newblock {\em Proceedings of the IEEE}, pages 2278--2324, 1998.

\bibitem{lee2023optimizations}
Hyunhoon Lee and Youngjoo Lee.
\newblock Optimizations of privacy-preserving dnn for low-latency inference on encrypted data.
\newblock {\em IEEE Access}, 2023.

\bibitem{Lee}
Joon-Woo Lee, Hyungchul Kang, Yongwoo Lee, Woosuk Choi, Jieun Eom, Maxim Deryabin, Eunsang Lee, Junghyun Lee, Donghoon Yoo, Young-Sik Kim, and Jong-Seon No.
\newblock Privacy-preserving machine learning with fully homomorphic encryption for deep neural network.
\newblock {\em IEEE Access}, pages 30039--30054, 2022.

\bibitem{lee2023precise}
Junghyun Lee, Eunsang Lee, Joon-Woo Lee, Yongjune Kim, Young-Sik Kim, and Jong-Seon No.
\newblock Precise approximation of convolutional neural networks for homomorphically encrypted data.
\newblock {\em IEEE Access}, 2023.

\bibitem{liu2021machine}
Bo~Liu, Ming Ding, Sina Shaham, Wenny Rahayu, Farhad Farokhi, and Zihuai Lin.
\newblock When machine learning meets privacy: A survey and outlook.
\newblock {\em ACM Computing Surveys (CSUR)}, 2021.

\bibitem{liu2022convnet}
Zhuang Liu, Hanzi Mao, Chao-Yuan Wu, Christoph Feichtenhofer, Trevor Darrell, and Saining Xie.
\newblock A convnet for the 2020s.
\newblock In {\em Proceedings of the IEEE/CVF conference on computer vision and pattern recognition}, pages 11976--11986, 2022.

\bibitem{cryptoeprint:2011/277}
Yagisawa Masahiro.
\newblock Fully homomorphic encryption without bootstrapping.
\newblock {\em Saarbr{\"u}cken/Germany: LAP LAMBERT Academic Publishing}, 2015.

\bibitem{ecgdataset}
George~B. Moody and Roger~G. Mark.
\newblock The impact of the mit-bih arrhythmia database.
\newblock {\em IEEE Engineering in Medicine and Biology Magazine}, pages 45--50, 2001.

\bibitem{lattigo}
Christian~Vincent Mouchet, Jean-Philippe Bossuat, Juan~Ram{\'o}n Troncoso-Pastoriza, and Jean-Pierre Hubaux.
\newblock Lattigo: A multiparty homomorphic encryption library in go.
\newblock In {\em Proceedings of the Workshop on Encrypted Computing and Applied Homomorphic Cryptography (WAHC)}, pages 64--70, 2020.

\bibitem{nistdppost}
NIST.
\newblock Differential privacy: Future work \& open challenges.
\newblock \url{https://www.nist.gov/blogs/cybersecurity-insights/differential-privacy-future-work-open-challenges}, 2022.

\bibitem{ozcan2023homomorphic}
Ali~{\c{S}}ah {\"O}zcan, Can Ayduman, Enes~Recep T{\"u}rko{\u{g}}lu, and Erkay Sava{\c{s}}.
\newblock Homomorphic encryption on gpu.
\newblock {\em IEEE Access}, 2023.

\bibitem{park2022aespa}
Jaiyoung Park, Michael~Jaemin Kim, Wonkyung Jung, and Jung~Ho Ahn.
\newblock Aespa: Accuracy preserving low-degree polynomial activation for fast private inference.
\newblock {\em arXiv preprint arXiv:2201.06699}, 2022.

\bibitem{navercloudServer}
NAVER~Cloud platform.
\newblock {Server}.
\newblock \url{https://www.ncloud.com/product/compute/server}, 2023.

\bibitem{rahulamathavan2022privacy}
Yogachandran Rahulamathavan.
\newblock Privacy-preserving similarity calculation of speaker features using fully homomorphic encryption, 2022.

\bibitem{sealcrypto}
{M}icrosoft {SEAL} (release 4.1).
\newblock \url{https://github.com/Microsoft/SEAL}, January 2023.
\newblock Microsoft Research, Redmond, WA.

\bibitem{tian2019distributed}
Zhihong Tian, Chaochao Luo, Jing Qiu, Xiaojiang Du, and Mohsen Guizani.
\newblock A distributed deep learning system for web attack detection on edge devices.
\newblock {\em IEEE Transactions on Industrial Informatics}, 16(3):1963--1971, 2019.

\bibitem{wang2022federated}
Yi~Wang, Mengshuo Jia, Ning Gao, Leandro Von~Krannichfeldt, Mingyang Sun, and Gabriela Hug.
\newblock Federated clustering for electricity consumption pattern extraction.
\newblock {\em IEEE Transactions on Smart Grid}, 13(3):2425--2439, 2022.

\bibitem{xing2023coal}
Zhizhong Xing, Shuanfeng Zhao, Wei Guo, Fanyuan Meng, Xiaojun Guo, Shenquan Wang, and Haitao He.
\newblock Coal resources under carbon peak: Segmentation of massive laser point clouds for coal mining in underground dusty environments using integrated graph deep learning model.
\newblock {\em Energy}, 285:128771, 2023.

\bibitem{xu2020social}
Guangxia Xu, Weifeng Li, and Jun Liu.
\newblock A social emotion classification approach using multi-model fusion.
\newblock {\em Future Generation Computer Systems}, 102:347--356, 2020.

\bibitem{yang2023review}
Wencheng Yang, Song Wang, Hui Cui, Zhaohui Tang, and Yan Li.
\newblock A review of homomorphic encryption for privacy-preserving biometrics.
\newblock {\em Sensors}, 23(7):3566, 2023.

\bibitem{opacus}
Ashkan Yousefpour, Igor Shilov, Alexandre Sablayrolles, Davide Testuggine, Karthik Prasad, Mani Malek, John Nguyen, Sayan Ghosh, Akash Bharadwaj, Jessica Zhao, Graham Cormode, and Ilya Mironov.
\newblock Opacus: {U}ser-friendly differential privacy library in {PyTorch}.
\newblock {\em arXiv preprint arXiv:2109.12298}, 2021.

\bibitem{zhang2020privacy}
Jiliang Zhang, Chen Li, Jing Ye, and Gang Qu.
\newblock Privacy threats and protection in machine learning.
\newblock In {\em Proceedings of the Great Lakes Symposium on VLSI (GLSVLSI)}, pages 531--536, 2020.

\bibitem{zhang2023adding}
Lvmin Zhang, Anyi Rao, and Maneesh Agrawala.
\newblock Adding conditional control to text-to-image diffusion models.
\newblock In {\em Proceedings of the IEEE/CVF International Conference on Computer Vision}, pages 3836--3847, 2023.

\end{thebibliography}
       
\begin{IEEEbiography}[{\includegraphics[width=1in,height=1.25in,clip,keepaspectratio]{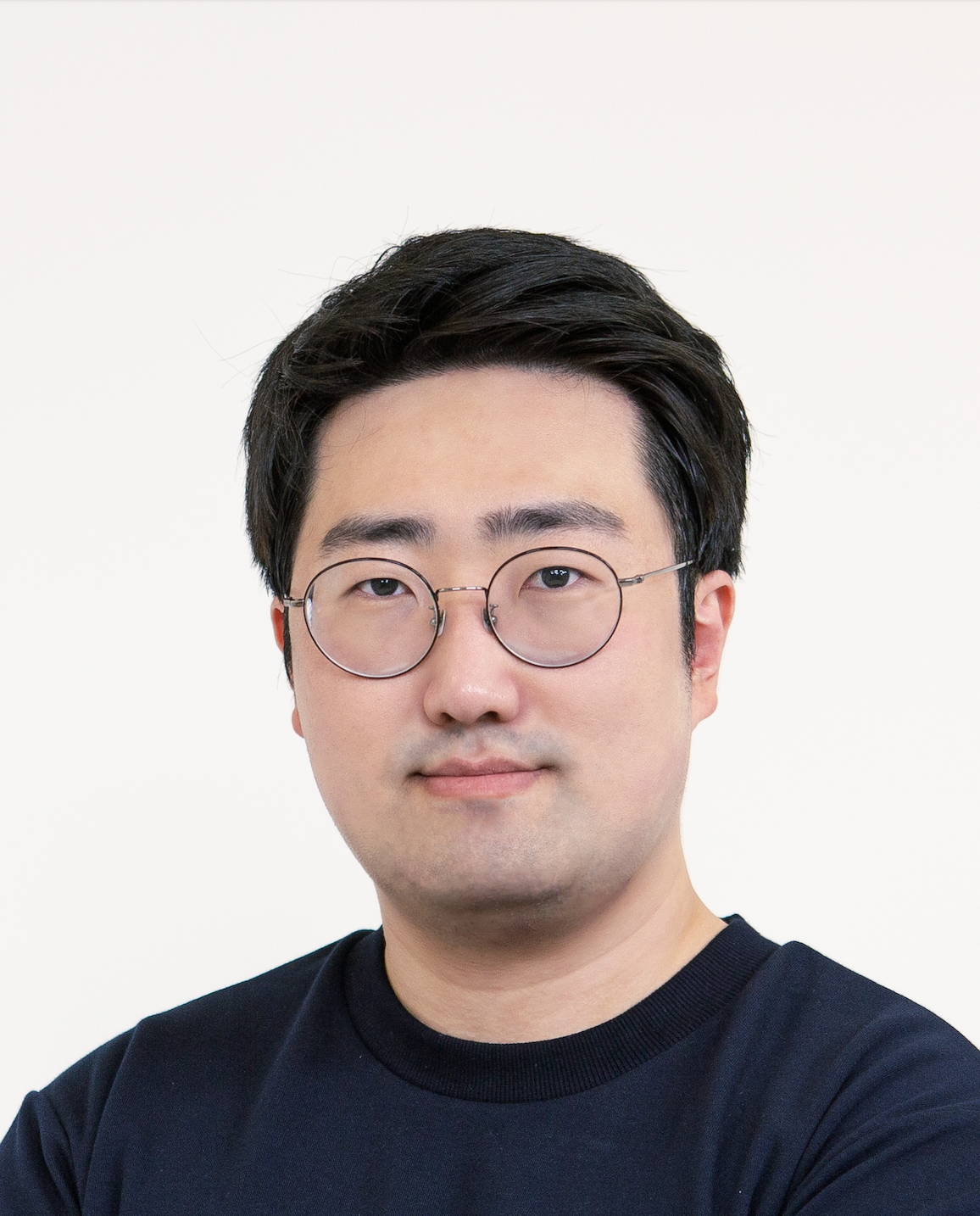}}]{\uppercase{Hyunmin Choi}} 
received the M.S. degree from the Graduate School of Information and Communications, Sungkyunkwan University, Seoul, Republic of Korea, in 2022. He is currently pursuing a Ph.D. degree in the Electrical and Computer Engineering Department at Sungkyunkwan University and working on security development at NAVER Cloud Corporation. His research interests include privacy-enhancing technology and AI security.
\end{IEEEbiography}

\begin{IEEEbiography}[{\includegraphics[width=1in,height=1.25in,clip,keepaspectratio]{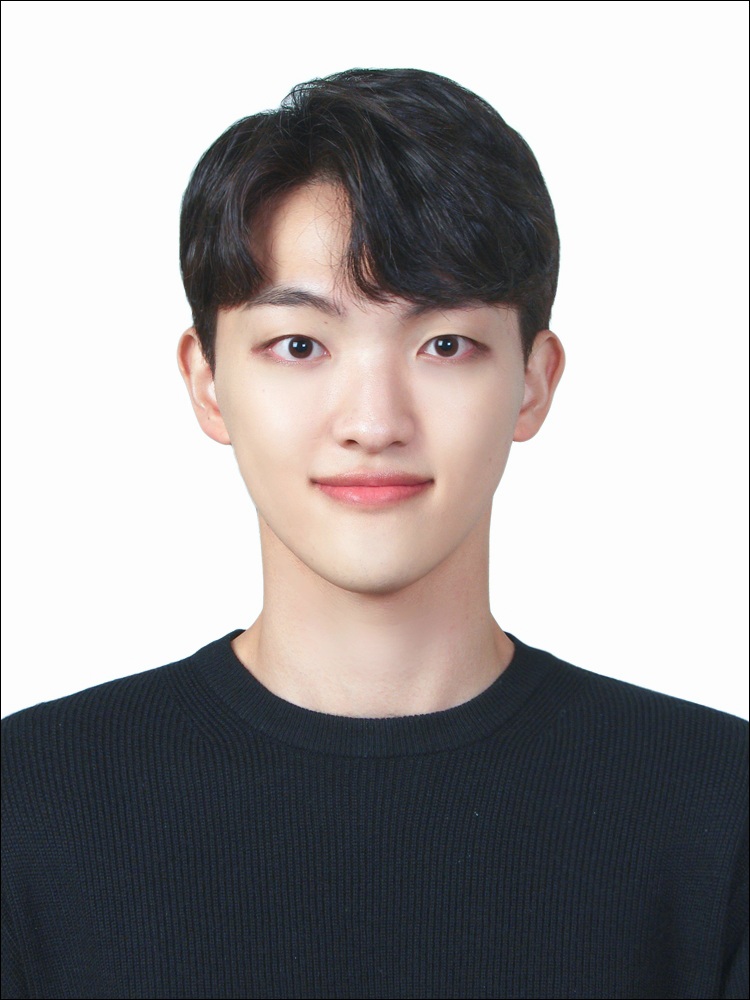}}]{\uppercase{Jihun Kim}} 
is currently pursuing a B.S. degree with dual majors in the Department of Mathematics and the Department of Software at Sungkyunkwan University.
His research interests include homomorphic encryption, data privacy and AI security.
\end{IEEEbiography}

\begin{IEEEbiography}[{\includegraphics[width=1in,height=1.25in,clip,keepaspectratio]{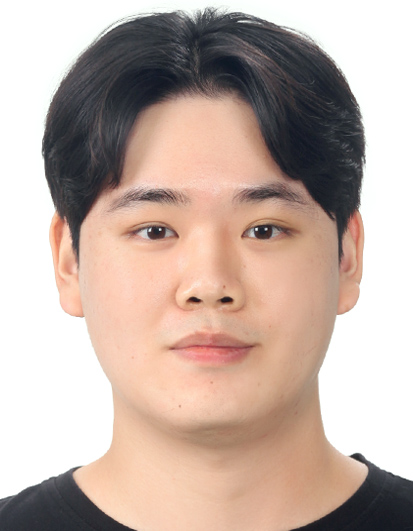}}]{\uppercase{Seungho Kim}}   
is currently pursuing an M.S. degree with the Department of Electrical and Computer Engineering at Sungkyunkwan University. His research interests include data privacy and usable security.
\end{IEEEbiography}

\begin{IEEEbiography}[{\includegraphics[width=1in,height=1.25in,clip,keepaspectratio]{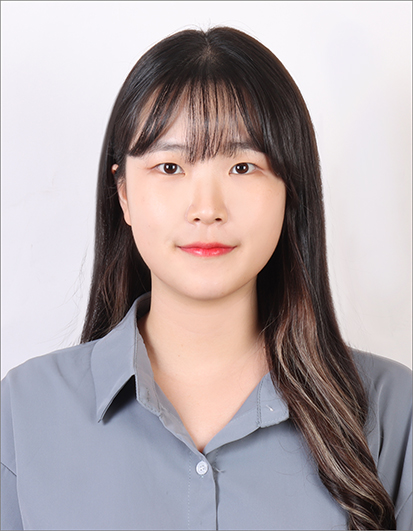}}]{\uppercase{Seonhye Park}} 
is currently pursuing a Ph.D. degree in the Department of Electrical and Computer Engineering at Sungkyunkwan University. Her research interests include AI security and usable security.
\end{IEEEbiography}

\begin{IEEEbiography}[{\includegraphics[width=1in,height=1.25in,clip,keepaspectratio]{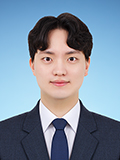}}]{\uppercase{Jeongyong Park}} 
received the M.S. degree of science in engineering from the Department of Software, Sungkyunkwan University, in 2023. His current research interests include AI security and data-driven security.
\end{IEEEbiography}

\begin{IEEEbiography}[{\includegraphics[width=1in,height=1.25in,clip,keepaspectratio]{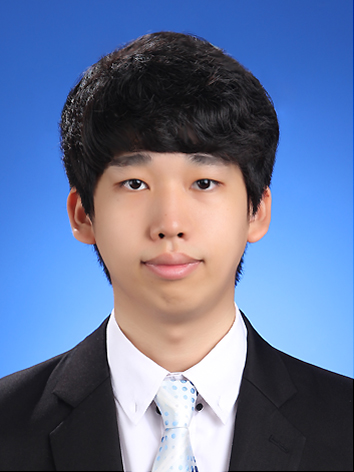}}]{\uppercase{Wonbin Choi}}   
received the M.S. degree in information security from the Department of Information Security, School of Cybersecurity, Korea University, Seoul, South Korea, in 2019. He is currently working in Security Development at NAVER Cloud Corporation. His current research interests include analysis of security vulnerability, formal verification, and information security.
\end{IEEEbiography}

\begin{IEEEbiography}[{\includegraphics[width=1in,height=1.25in,clip,keepaspectratio]{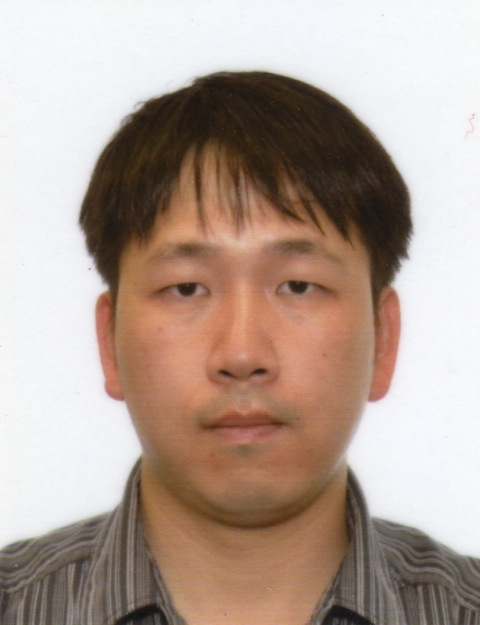}}]{\uppercase{Hyoungshick Kim}} received his B.S. in Information Engineering from Sungkyunkwan University (1999), M.S. in Computer Science from KAIST (2001), and Ph.D. from the University of Cambridge's Computer Laboratory (2012). He completed postdoctoral work at the University of British Columbia's Department of Electrical and Computer Engineering. Kim worked as a senior engineer at Samsung Electronics (2004--2008) and served as a distinguished visiting researcher at CSIRO, Data61 (2019--2020). He is an associate professor in the Department of Computer Science and Engineering at Sungkyunkwan University. His research focuses on usable security, security vulnerability analysis, and data-driven security.
\end{IEEEbiography}


\end{document}